\documentclass[10pt]{elsarticle}
\RequirePackage{makeidx, amsmath, amsfonts, amssymb, amsthm, setspace, caption, amsthm, subcaption, graphicx, bm}
\usepackage[total={6in, 8in}]{geometry}
\usepackage[utf8]{inputenc}
\usepackage{amsfonts}
\usepackage{amssymb}
\usepackage{amsthm}
\usepackage{amsmath}
\usepackage{mathrsfs}
\usepackage{graphicx}
\usepackage{esvect}
\usepackage{tikz}
\usepackage{algpseudocode}
\usepackage{algorithm}
\usepackage{verbatim}
\usepackage{bbold}
\usepackage{caption}
\usepackage{acronym}
\usepackage{overpic}
\usepackage{amsmath}
\usepackage{xcolor}
\usepackage{sectsty}
\usepackage[section]{placeins}
\usepackage{array, multirow}
\usepackage[normalem]{ulem}
\usepackage{enumitem}
\usepackage{url, hyperref, cleveref}
\usepackage{comment}
\usepackage{booktabs}
\usepackage{longtable}
\usepackage[table]{xcolor}
\usepackage{lscape}
\usepackage{mathrsfs}
\usepackage{mathrsfs}
\usepackage{multicol}
\usepackage{multirow}
\usepackage{rotating}
\usepackage{soul}
\usepackage{placeins}

\usepackage{caption}
\usepackage[normalem]{ulem}

\newtheorem{theorem}{Theorem}[section]

\newtheorem{proposition}{Proposition}[section]

\newtheorem{lemma}{Lemma}[section]

\begin{document}


	\begin{frontmatter}
		\date{\today}
		
        \title{\LARGE Structural and Temporal Hallmarks of Genealogical Networks}
		\author[Carlson]{Japheth Carlson}
        \author[Kim]{Teayoun Kim}
		\author[Lawyer]{Matthew Lawyer}
        \author[Pochman]{Wyatt Pochman}
        \author[Thygerson]{Emeline Thygerson}
		\author[Webb]{Benjamin Webb\footnote{Author to whom any correspondence should be addressed is Benjamin Webb using bwebb@mathematics.byu.edu.}}
		\address[Carlson]{Department of Mathematics, Brigham Young University, Provo, UT 84602, USA, jrc117@student.byu.edu}
        \address[Kim]{Department of Mathematics, Brigham Young University, Provo, UT 84602, USA, rkawk1@student.byu.edu}
		\address[Lawyer]{Department of Mathematics, Brigham Young University, Provo, UT 84602, USA, matthew.lawyer@mathematics.byu.edu}
		\address[Pochman]{Department of Mathematics, Brigham Young University, Provo, UT 84602, USA, wp255@student.byu.edu}
        \address[Thygerson]{Department of Mathematics, Brigham Young University, Provo, UT 84602, USA, emeline.thygerson@byu.edu}
        \address[Webb]{Department of Mathematics, Brigham Young University, Provo, UT 84602, USA, bwebb@mathematics.byu.edu}
		
		\newcommand{\R}{\mathbb{R}}
        
		\begin{abstract}
        The rapid growth of the genealogical sector, spanning platforms with billions of records and millions of users, has produced some of the largest and most complex networks available for analysis. Despite substantial advances in genealogical network research, it remains unclear whether human kinship networks exhibit universal structural properties. We address this by developing an integrated approach to genealogical network analysis that combines network-theoretic structure with an inferred notion of time. Using over one hundred datasets from the Kinsources repository, we reinterpret standard network measures in genealogical terms and introduce \emph{pseudogenerations}, a method for extracting temporal structure directly from network topology.

        Within this framework, we identify common features shared across datasets. We find that genealogical networks exhibit scale-free--like degree and component-size distributions, multiscale family organization, and small-world behavior with respect to genetic and union-based distances. We show that 2-components provide a natural unit of genealogical structure, observe consistent disassortative mixing, and find that recorded unions are strongly biased toward short genetic distances relative to potential pairings. We also document temporal and demographic patterns, including shifts in recorded parental and child information, as well as correlations among recorded unions, parents, and children. These results suggest that diverse genealogical datasets share a common set of structural and temporal characteristics, providing evidence for universal features of human kinship networks and establishing a general framework for their comparative analysis.
		\end{abstract}
		
		\begin{keyword}
			Genealogical Networks, Structural Network Analysis, Temporal Network Evolution, Pseudogenerations 
		\end{keyword}
		
	\end{frontmatter}
	
 
\section{Introduction}\label{sec:intro}

Genealogical networks, which link parents, children, and partners across generations, are of rapidly growing interest due to their relevance in genetics, sociology, demography, and economics \cite{greenwood2014marry, rohde2004common, white2011kinship, malmi2018trends}. Simultaneously, genealogical networks are attracting attention due to individual and public engagement in genealogical research and the growth of large-scale, crowd-sourced genealogy platforms with tens of millions of users \cite{kaplanis2018population, shan2025genealogy, wagner2012attitudes}. This growth is currently driving the development of large-scale online platforms such as Ancestry and FamilySearch, which integrate census, vital, and immigration records into searchable digital systems \cite{ancestryCompanyFacts, familysearch2026facts}. The sector comprising the genealogy products and services market is also projected to grow substantially, with one industry report estimating a~10.1\% annual compound growth rate and a market size exceeding \$8.4 billion by 2030 \cite{businessresearch2026genealogy}.

Together, these developments have produced some of the largest and most richly annotated networks available for scientific analysis. For instance, the FamilySearch dataset, the largest dataset yet compiled, has constructed a human family tree with over 1.86 billion individuals, based on 4.03 billion sources and more than 16.93 billion searchable names in historical records \cite{familysearch2026facts}. The availability of genealogical data at this scale has enabled advances at the intersection of network science, computational social science, population genetics, and topological data analysis. Recent advances include reconstruction of genome-wide genealogies for thousands of individuals, enabling the study of long-range relatedness and demographic history at population scale \cite{kaplanis2018population, speidel2019method}, as well as analyses of spatial dispersal and population structure \cite{osmond2024dispersal}.

Current research in the area uses network-theoretic approaches to characterize genealogical structure in terms of cycles, relinking, and higher-order connectivity \cite{puck2, relinking, white2011kinship, structural-endogamy}. Topological methods, including persistent homology, show that genealogical networks exhibit structural signatures distinct from other social systems, even under subsampling \cite{Boyd2023}. Large-scale demographic studies further examine fertility, mortality, migration, and family structure across populations \cite{kaplanis2018population, nelson2025matrilineal}, while other methodological advances include constructing probabilistic models of genealogical processes \cite{king2022markov} and performing automated record linkage across historical data sources \cite{malmi2018trends, abramitzky2021linking}.

Despite these advances in genealogical network analysis, it remains unclear whether the patterns observed in individual datasets reflect only population-specific characteristics or more universal principles of human kinship organization. Although human societies differ widely in culture, geography, and historical period, kinship networks arise from common biological and social constraints governing reproduction, partnership formation, and ancestry, suggesting the possibility of shared features across populations. Existing studies have primarily focused on individual genealogical datasets \cite{case-study-2,case-study-1}, leaving open the question of whether such commonalities exist and how they evolve over time.

Motivated by this gap, we develop a unified framework for the comparative analysis of genealogical networks and their temporal structure. Using this framework, we investigate whether structural and temporal hallmarks are consistently observed across diverse human kinship networks and, if so, whether these represent universal features of genealogical organization. We apply this framework to over one hundred genealogical networks from \textit{Kinsources.org}, an open-access, peer-reviewed repository of kinship data curated for scientific research \cite{kinsources}. These datasets span a wide range of sizes, ethnicities, cultural contexts, recording practices, and historical periods (see the Appendix for a complete list of datasets), providing a diverse basis for identifying structural and temporal patterns that persist across human populations.

For our structural analysis of these datasets, we recast several of the standard network-theoretic concepts in genealogical terms, including degree, components, distances, assortativity, and clustering coefficients (see Section \ref{sec:spat}). In the context of genealogical networks, \textit{degree} describes an individual's number of parents, children, and partners. \textit{1-components} capture individuals connected via genealogical data while \textit{2-components} identify more cohesive family substructures. We consider three notions of distance: the \emph{standard} graph distance, defined as the length of the shortest path between two individuals; the \emph{genetic distance}, defined as the distance to their nearest common ancestor; and the \emph{distance to union}, which restricts genetic distance to pairs of individuals who form a union. Assortativity, specifically \textit{degree assortativity}, describes the tendency for individuals of high (low) degree to be connected to individuals of high (low) degree, effectively telling us whether individuals from large (or small) families themselves tend to form large (or small) families. \textit{Clustering coefficients} act as a proxy for both the prevalence of unions and the degree to which genealogical networks are locally tree-like.

We find that the distributions of recorded children and unions in the Kinsources datasets are approximately scale-free, suggesting that genealogical datasets are not characterized by a typical family size or number of unions. Similar results hold for the size distributions of 1- and 2-components, which suggests that family information is lost at all scales in genealogical data and that multiscale family structures are a fundamental property of these networks (see Section \ref{sec:spat}). In this context, we prove under mild conditions that $k$-components for $k\geq3$ fracture genealogically meaningful structures, separating children from parents (see the Appendix). These genealogical networks also exhibit, on average, disassortative mixing, which becomes stronger as unions are removed from the network. This indicates that children from large (small) families tend to have smaller (larger) families, a pattern that may reflect either genealogical structure itself or patterns of recording genealogical data. Finally, because of the prevalence of triangles induced by unions in local family subgroups, the clustering coefficients of genealogical networks are quite high relative to other real-world networks (see Section \ref{sec:spat}).   

In addition to these network properties, we introduce and examine genealogically specific measures related to individuals, such as parent type statistics, average numbers of children by parent type, and parent-child-union correlations. These quantities describe the fractions of different parent types (which depend on completeness of familial data), the average number of children associated with each parental type, and the correlations between each individual's number of recorded parents, children, and unions (see Section \ref{sec:spat}).

Beyond these structural features we also consider the temporal evolution of genealogical networks. As previously mentioned, a major feature that distinguishes genealogical networks from other real-world networks is an implicit temporal structure. That is, parent-child edges are always directed forward in time. Furthermore, individuals and relationships, once properly recorded, remain in the genealogical network. These two features allow us to place a temporal structure on a genealogical network giving it a generational structure, without reference to anything other than the network's graph structure. This, in turn, allows for a temporal analysis of the network despite the fact that most genealogical datasets are static, not dynamic.

To study this temporal structure, we introduce the notion of pseudogenerations, which provide an approximate temporal ordering of individuals in a genealogical network (see Section \ref{sec:pseudo}). This construction, phrased as a network optimization problem (see Equation~\eqref{eq:pop}), enables the study of the evolution of structural properties across generations, yielding an integrated structural and temporal framework for characterizing the defining features of genealogical networks.

Computing pseudogenerational structures for the Kinsources datasets enables a temporal analysis of the structural features introduced in Section \ref{sec:spat}. We find that genetic distances and union-based distances exhibit approximately logarithmic growth across pseudogenerations, suggesting that genealogical networks display small-world behavior with respect to these metrics (see Figure \ref{fig:9} (left)). In particular, while individual families form tightly clustered substructures, individuals in these families remain connected globally through a relatively small number of genealogical links. Using the union-based distance, we further find that recorded unions are strongly biased toward short genetic separations, with half of all unions occurring within the closest 15.63\% of potential pairings (see Figure \ref{fig:9} (right)).

For the parent type statistics, we observe a pronounced temporal shift: there is a strong decrease in the fraction of individuals with two recorded parents, while the fraction of parents with no recorded children has a nearly monotonic increase over time (see Figure \ref{fig:11} (left)). For the correlation structure, we find a strong positive association between an individual's number of recorded children and number of recorded unions, which increases over time. Perhaps surprisingly, there is effectively no correlation between an individual's number of recorded parents and number of recorded children and a weak but negative correlation between an individual's number of recorded parents and number of recorded unions (see Table  \ref{tab:2} and Figure \ref{fig:11} (right)).    

In summary we make the following contributions: 

\begin{itemize}
\item We introduce a unified framework for the comparative analysis of genealogical networks that integrates structural and temporal perspectives through the notion of pseudogenerations.

\item We adapt and reinterpret core network-theoretic concepts in genealogical terms, providing a methodology for studying family-based networks using only network structure.

\item Across more than one hundred genealogical datasets, we identify common structural hallmarks of human kinship networks, including scale-free--like degree and component-size distributions, multiscale family organization, and small-world behavior with respect to genetic and union-based distances.

\item We identify 2-components as a natural unit of genealogical structure, balancing connectivity and interpretability, and show that higher-order connectivity fragments meaningful genealogical relationships.

\item We identify common interaction patterns across datasets, including consistently disassortative degree correlations and the preferential formation of unions at relatively short genetic distances.

\item We identify common temporal and demographic patterns, including shifts in recorded parental and child information, as well as correlations among recorded unions, parents, and children.

\end{itemize}

The paper is organized as follows. In Section~\ref{sec:background}, we describe the Kinsources datasets analyzed in this work and our framework for representing genealogical networks. Section~\ref{sec:spat} examines the structural (topological) features of these data, which we term structural hallmarks. In Section~\ref{sec:data}, we introduce pseudogenerations, a method for inferring temporal structure from genealogical data, and use it to study the temporal evolution of these features, which yield temporal hallmarks. We conclude by discussing directions for future work enabled by these results. Algorithms, proofs of analytic results, and a table of the Kinsources datasets are provided in the Appendix.

\section{Background and Data}\label{sec:background}

Unless otherwise noted, the datasets analyzed in this study comprise 105 genealogical networks drawn from \textit{Kinsources.org}, an open-access, peer-reviewed repository of kinship data curated for scientific research \cite{kinsources}. Of the 123 total Kinsources datasets, we excluded 18. Of the 18 excluded networks, seventeen are smaller duplicates of networks already included and one network (\textit{Udalen}) contains neither individuals nor edges.

We represent each dataset as a graph $G_i$ for $i=1,\dots,105$, whose nodes correspond to individuals and whose edges encode genealogical relationships. The resulting networks vary substantially in size, edge density, connectivity structure, and completeness, as well as in ethnicity, cultural context, recording practices, and historical period. The largest network, the \textit{Watchi} network, contains approximately 50{ }000 individuals and 83{ }000 edges, whereas the smallest, the ``\textit{Family}'' network, consists of 17 individuals and 24 edges. The complete list of the datasets $\{G_i\}_{i=1}^{105}$ is provided in the Appendix.

We also consider the aggregate network $A_G=\bigsqcup_{i=1}^{105}G_i$, the disjoint union of the 105 Kinsources datasets. We treat individuals in this dataset as distinct; that is, we do not attempt entity resolution across datasets. This convention allows us to characterize statistical properties at the level of individuals in the aggregate network, whereas analysis of the collection $\{G_i\}_{i=1}^{105}$ enables us to study statistical properties at the dataset level (cf. Tables \ref{tab:0}--\ref{tab:2}).

To represent these datasets, we use the notion of an Ore graph, a standard graph-theoretic representation of genealogical data in which individuals are modeled as vertices, parent--child relationships as directed arcs, and unions as undirected edges \cite{batagelj2008analysis}. The Ore graph representation models a genealogical network as a mixed graph $G=(N,E)$, where $N=\{1,2,\dots,n\}$ denotes the set of individuals, and the edge set $E=E_U\cup E_C$ consists of directed parent–child edges $E_C$ and undirected union edges $E_U$ (cf. Figure \ref{fig:1}). Directed edges encode the inherent asymmetry of parent–child relationships, while undirected union edges represent marital or spousal-type relationships between \textit{partners}, whose precise interpretation may vary across datasets. For a given network, we let $m$ denote its number of edges. For instance, the aggregate network has $n=249\text{ }597$ individuals and $m=439\text{ }613$ edges.

Depending on the specific type of analysis, we also consider simplified versions of this representation. In some cases, we work with the undirected version of the network, $G=(N,E)$, obtained by ignoring the directionality of parent-child relationships. In other cases, we focus on the \textit{genetic} version of the network, $\mathcal{G}=(N,E_C)$, which is formed by removing all union edges from the genealogical network. For most genealogical networks, the associated genetic network $\mathcal{G}$ is a directed acyclic graph (DAG), reflecting the temporal ordering inherent in genealogical relationships. While recording errors or inconsistencies can, in principle, introduce cycles, all 105 genetic networks analyzed in this study are DAGs.

\begin{figure}
\begin{center}
    \begin{tabular}{ccc}
        \begin{overpic}[scale=0.1]{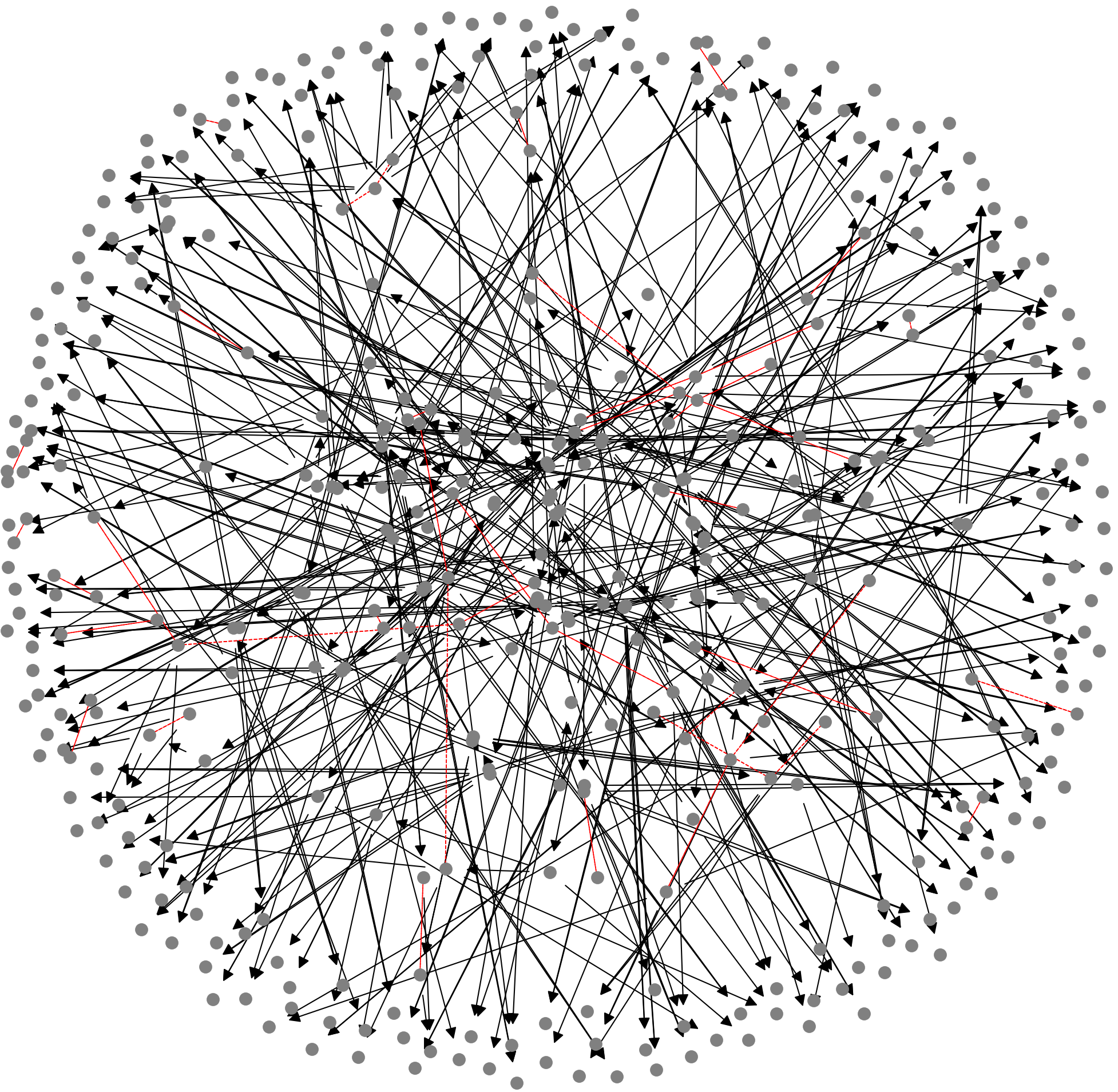}
        \put(-17.5,-7){Dogon--Konsogu--Donyu Genealogical Network}
        \end{overpic}
    \end{tabular}
    \vspace{0.05cm}
\caption{The Dogon--Konsogu--Donyu genealogical network $G=(N,E)$ is shown consisting of $n=399$ individuals and $m=592$ edges. The edge set $E=E_U\cup E_C$ comprises $|E_U|= 73$ undirected union edges, shown in red, and $|E_C|= 519$ directed parent-child edges shown in black (cf. Figure \ref{fig:7}).}
\label{fig:1}
\end{center}
\end{figure}

\section{Structural Hallmarks of Genealogical Networks}\label{sec:spat}
Genealogical networks are often described as “tree-like” because from the perspective of a single individual, ancestry is traced through parents, grandparents, and earlier generations. When attention is restricted to a small number of generations and union edges are ignored, this ancestral subgraph is indeed a tree. However, full genealogical networks are substantially more complex (for instance, see the Dogon--Konsogu--Donyu genealogical network in Figure \ref{fig:1}).

Even at the local level, cycles arise from elementary family structures, such as the triangles formed by two parents and a child. At larger scales, additional cycles emerge through shared ancestry: tracing sufficiently far back in time typically yields a nearest common ancestor of an individual’s parents. The inclusion of union edges further increases this complexity, generating larger and more complicated cycles that span multiple families and generations.  

In this section, we examine the topological features found in these datasets, which we refer to as \textit{structural hallmarks}. In Section \ref{sec:data}, we revisit these same features from a temporal perspective in order to study their evolution over time. The statistics we consider are drawn largely from network science, but are selected for their natural genealogical interpretation. These include standard measures such as degree distributions, $k$-components, assortativity, and clustering coefficients, as well as additional quantities specific to genealogical networks, including the average numbers of children, parents, and unions per individual (cf. Tables \ref{tab:0} and \ref{tab:2}).\\

\begin{figure}
\begin{center}
    \begin{tabular}{ccc}
        \hspace{-0.25cm}
        \begin{overpic}[scale=0.3000]{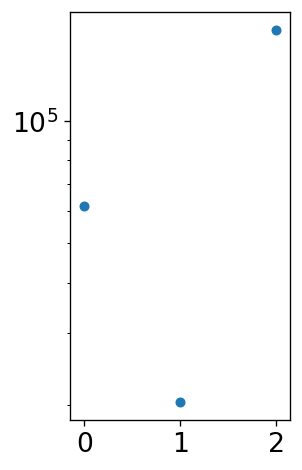}
        \put(10.5, -9){\small{Parent Distribution}}
        \put(15, -1.5){\scriptsize{$\mathrm{Number \ of \ Parents}$}}
        \put(5, 30){\tiny{\rotatebox{90}{$\textrm{Individual Count}$}}}
        \end{overpic} \hspace{0.33cm}
        
        \begin{overpic}[scale=0.3000]{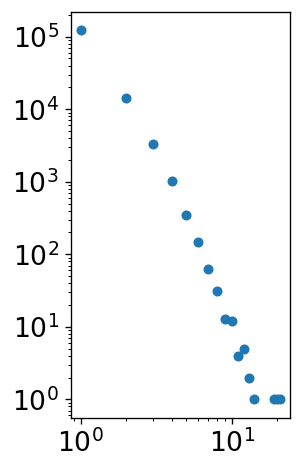}
        \put(10.5, -9){\small{Union Distribution}}
        \put(15, -1.5){\scriptsize{$\mathrm{Number \ of \ Unions}$}}
        \put(-4, 30){\tiny{\rotatebox{90}{$\textrm{Individual Count}$}}}
        \end{overpic} \hspace{0.33cm}
        
        \begin{overpic}[scale=0.3000]{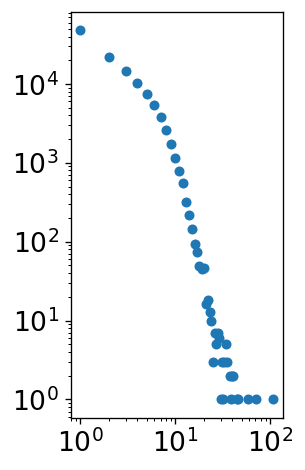}
        \put(13, -1.5){\scriptsize{$\mathrm{Number \ of \ Children}$}}
        \put(-4, 30){\tiny{\rotatebox{90}{$\textrm{Individual Count}$}}}
        \put(12,-9){\small{Child Distribution}}
        \end{overpic} \hspace{0.33cm}
        
         \begin{overpic}[scale=0.3000]{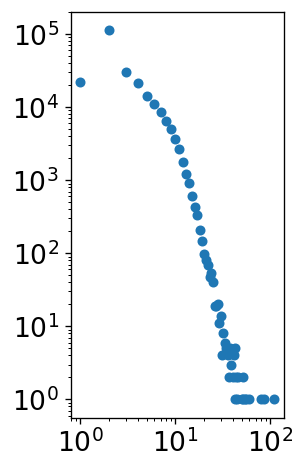}
         \put(21, -1.5){\scriptsize{$\mathrm{Total \ Degree}$}}
        \put(-4, 30){\tiny{\rotatebox{90}{$\textrm{Individual Count}$}}}
        \put(10,-9){\small{Degree Distribution}}
        \end{overpic}
        \vspace{10pt} 
    \end{tabular}
\caption{
The parent, union, child, and total degree distributions for the aggregate Kinsources dataset \(A_G=\bigsqcup_{i=1}^{105}G_i\) are shown from left to right. The parent distribution is shown on a semi-log scale. For the union, child, and total degree distributions, positive degrees are shown on log--log axes. Of these individuals, approximately 43.01\% have no recorded union, 52.00\% have no recorded children, and 1.60\% have a total degree distribution of zero. The union and child positive-degree distributions exhibit approximately scale-free behavior over the observed range, which is reflected in the total degree distribution.  This suggests that the recorded number of unions and children lacks a characteristic scale in the Kinsources dataset.}\label{fig:2}
\end{center}
\end{figure}

\begin{table}
\centering
\begin{tabular}{c|c|c|c|c|c|c}
\toprule
\toprule
Quantity &$n$&$m$&$p$& u & c & k\\
\midrule
$\langle \cdot \rangle_A$ &$249\text{ }597$& $439\text{ }613$& $1.42$& $0.68$& $1.42$& $3.52$\\
\midrule
$\langle \cdot\rangle_{i}$ &$2377.11$& $4186.79$& $1.31$& $0.75$& $1.31$& $3.37$\\
$\sigma_i(\cdot)$ & $6107.91$& $10\text{ }460.09$& $0.26$& $0.22$& $0.26$& $0.54$\\
\bottomrule
\bottomrule
\end{tabular}
\caption{Summary statistics for size and degree properties of genealogical networks. Average values $\langle\cdot\rangle_A$ are reported for the aggregate dataset $A_G = \bigsqcup_{i=1}^{105}G_i$ (first row), along with the mean $\langle\cdot\rangle_i$ and standard deviation $\sigma_i(\cdot)$ across the 105 individual Kinsources datasets $\{G_i\}_{i=1}^{105}$ (second and third rows, respectively). Columns report, from left to right, the number of individuals $n$ and edges $m$, the average number of parents $p$, number of unions $u$, number of children $c$, and average degree $k$, respectively.} 
\label{tab:0}
\end{table}

\textbf{Degree Distribution:} The first structural feature we consider is degree distribution. For simplicity, we work with the undirected version of a genealogical network.  In this representation, individual $i$ has degree $k^i=k^i_{p}+k^i_{u}+k^i_{c}$, where $k^i_p$, $k^i_u$, and $k^i_c$ denote the individual's number of recorded parents, unions, and children, respectively. The corresponding in- and out-degree distributions for the directed version of the network can be recovered from these distributions as $k^i_{\text{in}}=k^i_{p}+k^i_{u}$ and  $k^i_{\text{out}}=k^i_{u}+k^i_{c}$ for the individual $i$, respectively. The average number of recorded parents $\langle p\rangle$, unions $\langle u\rangle$, children $\langle c\rangle$, and degree $\langle k\rangle$ for both the aggregate dataset $A_G=\bigsqcup_{i=1}^{105}G_i$ and the individual datasets $\{G_i\}_{i=1}^{105}$ are shown in Table \ref{tab:0}. For the individual datasets, the corresponding standard deviations are also listed. Notably, the average numbers of recorded parents, unions, and children show similar standard deviations across the Kinsources datasets, suggesting comparable variability driven by cultural, temporal, or recording differences.

Figure \ref{fig:2} shows the parent, union, child, and total degree distributions, respectively, for the aggregate dataset $A_G=\bigsqcup_{i=1}^{105}G_i$. The parent distribution is shown on a semi-log scale, while the positive-degree union, child, and total degree distributions are shown on log--log axes. Across the data, 67.0\% of individuals have two parents recorded, 8.2\% have one parent recorded, and 24.8\% have no recorded parents. Since each individual necessarily has two biological parents, this indicates a loss of 28.88\% of the parental edges the corresponding complete genealogical dataset would have. This notion of missing information is revised later in this section when we consider parent-union-child statistics. 

The distribution of unions in Figure \ref{fig:2} displays a strikingly linear trend, suggesting a scale-free structure in which the probability of having union degree of $k_u$ is $P(k_u)\sim k_u^{-\alpha_u}$ for $k_u\geq k_u^{min}$ for some cutoff $k_u^{min}$ and \textit{scaling exponent} $\alpha_u$. Using the maximum-likelihood estimators
\begin{align}\label{eq:slope}
\alpha &= 1 + N \left( \sum_i \ln \frac{k^i}{k^{\min}-1/2} \right)^{-1}, \\
\sigma &= (\alpha - 1)/\sqrt{N},\label{eq:error}
\end{align}
on the generic quantity $k$, we estimate the scaling exponent $\alpha_u = 1.8765$, corresponding to a log--log slope of approximately $-\alpha_u$, with an \textit{error} of $\sigma_u = 0.0018$ for the union distribution when $k_u^{\min}=1$. The distribution of children shows a similar scale-free--like behavior, although over a larger range, with $\alpha_c = 2.9345$, $\sigma_c = 0.0087$, and $k_c^{\min}=4$. 

This suggests that there is no typical number of unions or family size to be found in these genealogical datasets. While the precise mechanism producing this pattern remains unclear, similar behavior is observed across many of the Kinsources datasets, suggesting the effect is not solely due to aggregation across multiple cultures and time periods, but instead reflects a more general and potentially surprising feature of genealogical network structure.\\ 

\begin{figure}
   \centering
    \begin{overpic}[scale=0.5]{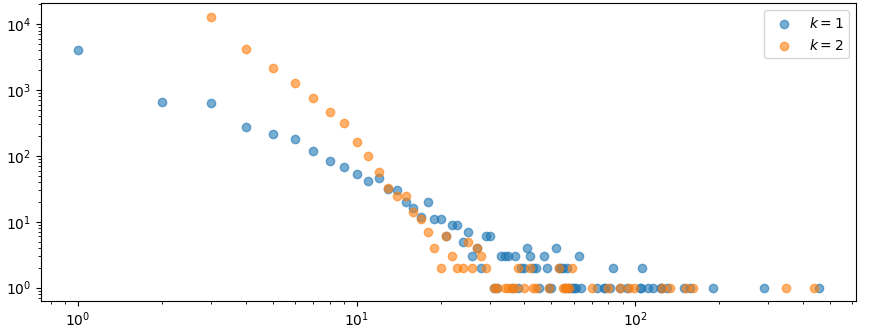}
        \put(16,-5){\scriptsize{1- and 2-Component Size Distributions of the Aggregate Dataset}}
        \put(42,-1){\scriptsize{$\mathrm{Component \ Size}$}}
        \put(-3,9){\tiny{\rotatebox{90}{$\textrm{Number of Components}$}}}
    \end{overpic} 
    \vspace{0.45cm}
   \caption{1- and 2-component size distributions (blue and orange, respectively) for the aggregate dataset $A_G=\bigsqcup_{i=1}^{105}G_i$ are shown on a log--log plot. Both distributions exhibit an approximately linear trend, suggesting scale-free–like behavior in component sizes. For $k=1$, this indicates that the groups of individuals connected by genealogical data occur over multiple scales with no typical size (i.e. the size of such groups do not cluster around a particular size). For $k=2$, a similar pattern is observed for more tightly connected family units, indicating that these higher-order family groups also occur across a wide range of sizes with no typical scale.}
   \label{fig:3}
\end{figure}

\textbf{k-Components:} A \textit{$k$-component} of a network $G=(N,E)$ is a maximal vertex set $C \subseteq N$ such that every pair of individuals in $C$ is connected by at least $k\geq1$ internally node-disjoint paths contained in the subgraph induced by $C$. For this analysis, we treat genealogical networks as undirected graphs and focus on $k=1$ and $k=2$. When $k=1$, $k$-components coincide with the connected components of the network. For $k=2$, the resulting $k$-components capture cohesive familial substructures, whereas higher-order components ($k \ge 3$) tend to fragment these structures into less interpretable units. 

This distinction arises from features that have a natural interpretation in genealogical networks. If two parents belong to a 2-connected component, then, under the assumption that the network is \textit{union-complete} (see the Appendix for the precise definition), each of their children necessarily belongs to the same component. In contrast, for $k\geq3$, every $k$-connected component must contain at least one set of partners whose children lie outside the component. Consequently, decompositions into $k$-connected components with $k\geq3$ may separate parents from their children, potentially obscuring the genealogical meaning of these components. Proofs of these structural properties are provided in the Appendix (see Lemma \ref{lemma:pc_edges} and Proposition \ref{thm:3-connected}).

For $k = 1$, the aggregate network $A_G = \bigsqcup_{i=1}^{105}G_i$ exhibits the component-size distribution shown in Figure~\ref{fig:3} (blue). The distribution follows an approximately linear trend on a log--log scale, consistent with a scale-free--like structure. Using the maximum-likelihood estimators defined in Equations~\eqref{eq:slope} and \eqref{eq:error}, we obtain the scaling exponent $\alpha_1 = 2.375$ with standard error $\sigma_1 = 0.046$, computed using $k^{\min}_1 = 6$. This indicates that the number of individuals connected through genealogical data spans multiple scales and lacks a characteristic component size. Similar behavior has been observed in other genealogical networks, including the much larger FamilySearch dataset, which contains over one billion individuals (see Figure 3 in~\cite{rapp2012analyzing}).

\begin{figure}
    \centering
    \begin{overpic}[scale=0.55]{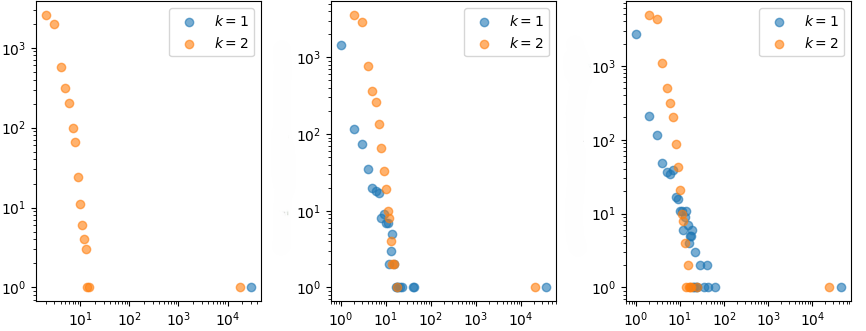}
        \put(11,-4.5){\scriptsize{Watchi 2008}}
        \put(9,-1.5){\scriptsize{$\mathrm{Component \ Size}$}}
        \put(-2.5,10){\tiny{\rotatebox{90}{$\textrm{Number of Components}$}}}
        
        \put(44,-4.5){\scriptsize{Watchi 2011}}
        \put(42.5,-1.5){\scriptsize{$\mathrm{Component \ Size}$}}
        \put(32,10){\tiny{\rotatebox{90}{$\textrm{Number of Components}$}}}
        
        \put(78,-4.5){\scriptsize{Watchi 2017}}
        \put(76.5,-1.5){\scriptsize{$\mathrm{Component \ Size}$}}
        \put(66,10){\tiny{\rotatebox{90}{$\textrm{Number of Components}$}}}
    \end{overpic}
    \vspace{0.5cm}
    \caption{The component size distributions of the Watchi genealogical dataset at three recording snapshots, 2008, 2011, and 2017, are shown from left to right, respectively. Blue and orange curves show the size distributions of 1-components and 2-components, respectively, on a log--log scale. In 2008 (left), a single 1-component contains all 2-components, where the 2-component distribution has a scale-free-like shape. Subsequent additions to the Watchi dataset (center and right) introduce new components and produce approximately scale-free distributions for the network's 1-components, while effectively preserving the distribution of the network's 2-components.}
    \label{fig:watchi_distrs}
\end{figure}

A comparable pattern is observed in the Watchi family, the largest dataset in Kinsources (cf.\ Figure~\ref{fig:watchi_distrs}, right panel). The Watchi data consist of three temporally distinct snapshots (2008, 2011, and 2017), with progressively more genealogical information incorporated over time; only the largest snapshot is used in the aggregate network $A_G$. Figure~\ref{fig:watchi_distrs} presents these snapshots from left to right, showing the distributions of 1- and 2-components (blue and orange, respectively) on log--log axes. As additional data are incorporated, the scale-free--like behavior of the 1-component  distribution becomes increasingly pronounced. That is, the inclusion of more data over time leads to the appearance of additional disconnected components, revealing more gaps in the data. The scale-free--like nature of the 1-components suggests that missing genealogical information (i.e. missing individuals and/or relationships) may be similarly distributed across multiple scales.

Moving to substructures within connected components, individuals within the same family are often connected by multiple independent paths arising from shared ancestry, sibling relations, and unions. These redundant paths create cycles, so that the removal of any single individual does not disconnect the subgraph. For $k=2$, a network's 2-components naturally delineate these more cohesive family units. The 2-component size distribution of the aggregate network is shown in Figure \ref{fig:3} (orange, log--log scale) and also exhibits a strikingly linear trend, with $\alpha_2=3.372$, $\sigma_2=0.016$, and $k_2^{\min}=3$. Unlike the 1-components, the scale-free pattern of 2-components in datasets like Watchi (cf. Figure \ref{fig:watchi_distrs}) appears more robust to the loss of information, though the precise underlying mechanism for this remains unclear.

This suggests that typical genealogical networks possess scale-free--like connectivity at both the level of global connected components (1-components) and local family subunits (2-components). The consistency of these patterns across the individual Kinsources and aggregate datasets suggests that such hierarchical, scale-free--like structure appears to be a fundamental property of genealogical networks.\\

\begin{figure}
\begin{center}
    \begin{tabular}{cc}
        \begin{overpic}[scale=0.22]{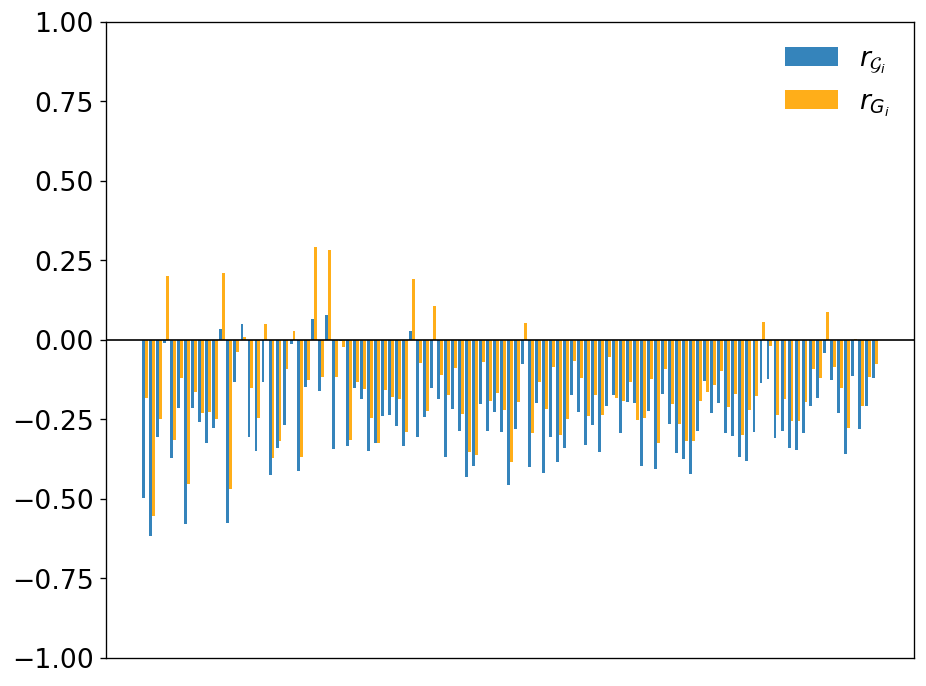}
        \put(36,-8.5){\scriptsize{Degree Assortativity}}
        \put(30,-3.5){\scriptsize{$\mathrm{Individual \ Datasets:\{G_i\}_{i=1}^{105}}$}}
        \put(-4,28){\tiny{\rotatebox{90}{$\textrm{Assortativity}$}}}
        \end{overpic} 
        \hspace{0.15cm}
         \begin{overpic}[scale=0.22]{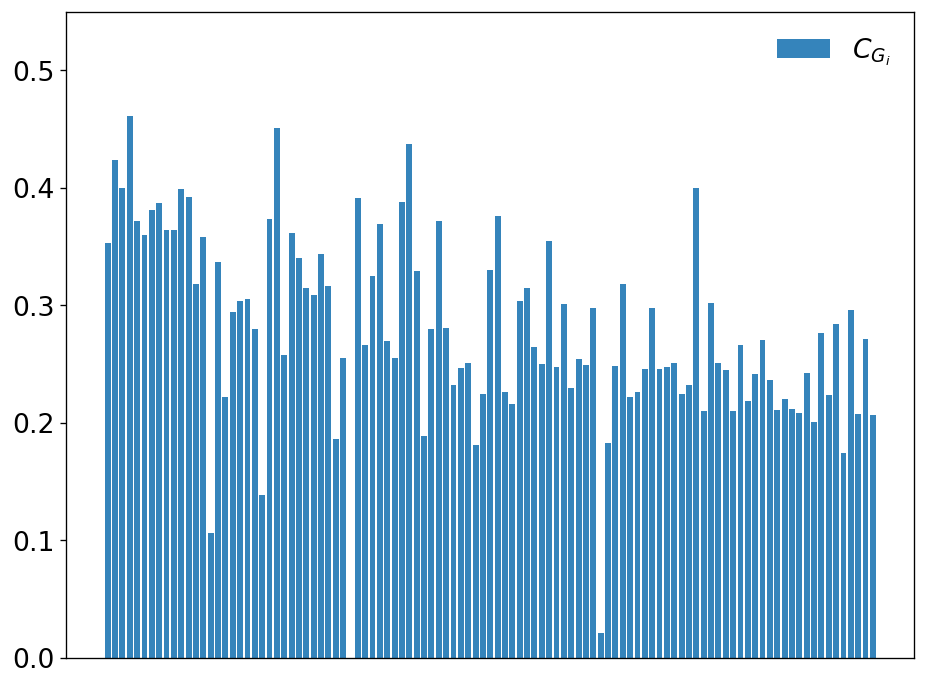}
        \put(26,-8.5){\scriptsize{Global Clustering Coefficient}}
        \put(26,-3.5){\scriptsize{$\mathrm{Individual \ Datasets:\{G_i\}_{i=1}^{105}}$}}
        \put(-4, 22.5){\tiny{\rotatebox{90}{$\textrm{Clustering Coefficient}$}}}
        \end{overpic} 
    \end{tabular}
    \vspace{0.35cm}
\caption{(Left) The degree assortativity coefficients of the genealogical datasets $\{G_i\}_{i=1}^{105}$ and the corresponding genetic datasets $\{\mathcal{G}_i\}_{i=1}^{105}$ are shown in orange and blue, respectively. (Right) The clustering coefficients of the genealogical datasets $\{G_i\}_{i=1}^{105}$ are shown. In both panels, datasets are ordered from left to right by increasing size, with the smallest dataset on the left and the largest on the right.}\label{fig:5}
\end{center}
\end{figure}

\textbf{Assortativity:} Assortativity measures the tendency of nodes to be connected to other nodes which share similar attributes. In network analysis, degree assortativity, the tendency for high- (or low-) degree nodes to connect preferentially to nodes of similar degree, is of particular interest. In the context of genealogical networks, degree assortativity effectively captures whether individuals from large (or small) families tend to form families of comparable size or form unions with families of comparable size. Here, an individual’s degree is determined by the number of recorded parents, children, and partners in the genealogical record.

For an undirected genealogical network $G$, the \textit{adjacency matrix} $A=[a_{ij}]$ has zero-one entries where $a_{ij}=1$ if individual $i$ shares a parent-child or union edge with individual $j$, and is zero otherwise. The degree assortativity coefficient of the network is given by
\[
r=\frac{\sum_{i,j}(a_{ij}-k^ik^j/2m)k^ik^j}{\sum_{i,j}(\delta_{ij}k^i-k^ik^j/2m)k^ik^j},
\]
where $\delta_{ij}$ denotes the Kronecker delta. The coefficient $r\in[-1,1]$, with positive values indicating assortative mixing and negative values indicating disassortativity.

For the aggregate genealogical network $A_G$, we obtain a degree assortativity of $r_G=-0.10$, indicating a tendency toward disassortativity. For comparison, the genetic version of the aggregate dataset has degree assortativity $r_{\mathcal{G}}=-0.16$, suggesting that the exclusion of union edges increases the disassortativity (see Table \ref{tab:1}). A likely explanation is that individuals connected through unions share family, leading to more comparable degrees; removing these edges therefore accentuates disassortativity. 

\begin{table}[b]
\centering
\begin{tabular}{c|c|c|c|c|c|c}
\toprule
\toprule
Quantity & $r_G$ & $r_{\mathcal{G}}$ & $C_G$ & $\ell_s$ & $\ell_g$ & $\ell_u$ \\
\midrule
$\langle \cdot \rangle_A$ & $-0.10$& $-0.16$& $0.23$& $10.97$ & $11.83$& $7.79$\\
\midrule
$\langle \cdot\rangle_{i}$ & $-0.16$& $-0.26$& $0.28$& $7.59$ & $4.99$ & $5.71$\\
$\sigma_i(\cdot)$ & $0.15$& $0.13$& $0.08$& $3.87$ & $2.50$ & $1.84$\\
\bottomrule
\bottomrule
\end{tabular}
\caption{Summary statistics for macroscopic properties of genealogical networks. Average values $\langle\cdot\rangle_A$ are reported for the aggregate dataset $A_G = \bigsqcup_{i=1}^{105}G_i$ (first row), along with the mean $\langle\cdot\rangle_i$ and standard deviation $\sigma_i(\cdot)$ across the 105 individual Kinsources datasets $\{G_i\}_{i=1}^{105}$ (second and third rows, respectively). Columns report, from left to right, the degree assortativity of the genealogical network $r_G$, the degree assortativity of the associated genetic network $r_\mathcal{G}$, the clustering coefficient $C_G$, the mean standard distance $\ell_s$, the mean genetic distance $\ell_g$, and the mean distance to union $\ell_u$.}
\label{tab:1}
\end{table}

A similar pattern is observed across the individual Kinsources datasets $\{G_i\}_{i=1}^{105}$.  The mean assortativity is $\langle r_G\rangle_i=-0.16$ with standard deviation $\sigma_i(r_G)=0.15$, whereas the corresponding genetic networks have a much larger disassortativity, given by $\langle r_\mathcal{G}\rangle_i=-0.26$ with $\sigma_i(r_\mathcal{G})=0.13$. On average, the genetic networks exhibit substantially larger disassortativity than the full genealogical networks. This behavior is remarkably consistent (see Figure~\ref{fig:5} (left)): 92 of the 105 genealogical networks (87.6\%) and 100 of the 105 corresponding genetic networks (95.2\%) exhibit negative degree assortativity.

Overall, the assortativity values are slightly to moderately negative, indicating that individuals with high recorded degree tend, on average, to be connected to individuals with lower recorded degree. In genealogical terms, this suggests a tendency for the size of the recorded family in which an individual is raised to differ from the size of the recorded family they later form or join through a union. This finding contrasts with \cite{family-size}, which reports a small but statistically significant intergenerational correlation in family size. The difference between these findings may stem from differences between recorded and actual family size, as well as differences in how family size is defined, although the exact difference is currently unknown.\\ 

\textbf{Clustering Coefficients:} The \textit{local} clustering coefficient $C_i$ of a node $i$ measures the probability that any two neighbors of node $i$ are connected, forming a triangle. This is calculated as the ratio of actual triangles to potential triangles for an individual. Given an undirected genealogical network $G$, this can be extended to the global clustering coefficient, which can be expressed as
\[
C=\frac{\text{number of triangles}\times3}{\text{number of paths of length 2}}.
\]

Because triangles arise only in the presence of unions, a low global clustering coefficient indicates that either relatively few parents in the network are connected by a union or individuals are missing one or more parents. In either case, the network becomes more locally tree-like, with individuals participating in fewer short cycles. Thus, the global clustering coefficient serves as a proxy both for the prevalence of unions and for the degree to which the genealogical network locally departs from the structure of a tree.

For the aggregate dataset \(A_G\), we obtain a global clustering coefficient of \(C_G = 0.23\). Across the Kinsources datasets, the mean clustering coefficient is \(\langle C \rangle_i = 0.28\), with standard deviation \(\sigma_i(C) = 0.08\) (see Table \ref{tab:1}). These coefficients are relatively high compared to most real-world networks, reflecting the natural formation of parent--child triangles in genealogical structures. We note that the associated genetic network has a clustering coefficient $C_{\mathcal{G}}$ of zero because removing union edges eliminates the parent-parent-child triangles present in the full genealogical network. The clustering coefficients of the individual datasets $\{G_i\}_{i=1}^{105}$ are shown in Figure \ref{fig:5} (right).\\

\textbf{Distances:} The standard distance matrix $D = [d_{ij}]$ of a genealogical network $G$ encodes the shortest-path distances between all pairs of individuals $i$ and $j$, with $d_{ij} = \infty$ if no path exists. Here, we consider standard distance and two other types of distances:
\begin{itemize}
    \item \textit{Standard Distance:} the shortest-path distance in the undirected version of $G$.
    \item \textit{Genetic Distance:} the shortest-path distance in the undirected genetic network $\mathcal{G}$ through a common ancestor.
    \item \textit{Distance to Union:} the genetic distance restricted to pairs of individuals connected by a union in $\mathcal{G}$.
\end{itemize}
Standard distance captures the mix of genetic separation and separation by union in the full genealogical network. Genetic distance provides a proxy for biological relatedness, yielding a lower bound on genetic similarity between individuals. Distance to union measures the genetic separation between partners, allowing us to quantify the genetic distance at which unions occur.

Because nearly all networks in the Kinsources dataset are disconnected, the average of a network's standard, genetic, and distance to union is typically infinite. Considering only finite distances, the following holds. In the aggregate network $A_G$, for all pairs of individuals at finite distances the average standard distance between individuals is $\langle\ell_s\rangle_A=10.97$, the average genetic distance is $\langle\ell_g\rangle_A=11.83$ and the average distance to union is $\langle\ell_u\rangle_A=7.79$. The average and standard deviation of these quantities over the individual Kinsources datasets $\{G_i\}_{i=1}^{105}$ are $\langle\ell_s\rangle_i=7.59$ and $\sigma_i(\ell_s)=3.87$, $\langle\ell_g\rangle_i=4.99$ and $\sigma_i(\ell_g)=2.50$, and $\langle\ell_u\rangle_i=5.71$ and $\sigma_i(\ell_u)=1.84$ (see Table \ref{tab:1}). 

A notable feature of these distances is that the ratios 
\[
\langle \ell_s \rangle_i/\langle \ell_s \rangle_A = 0.69, \qquad
\langle \ell_g \rangle_i/\langle \ell_g \rangle_A = 0.42, \qquad
\langle \ell_u \rangle_i/\langle \ell_u \rangle_A = 0.73,
\]
are all less than one, indicating that average distances in the individual datasets are typically smaller than those in the aggregate dataset. This discrepancy is likely influenced by the prevalence of smaller individual networks, which tend to exhibit shorter characteristic path lengths. Among these ratios, the union-distance ratio is the largest, though only slightly larger than the standard-distance ratio. This suggests that the genetic distances associated with recorded unions are comparatively stable across scales. In particular, while individual datasets are smaller and typically exhibit shorter average genetic distances overall, their mean union distances remain closer to the aggregate-network value than do average genetic distances.\\

\begin{table}
\centering
\begin{tabular}{c|c|c|c|c|c|c|c|c|c|c}
\toprule
\toprule
Quantity &$p_{non}$ & $p_{prim}$ & $p_{join}$ & $p_{dual}$ & $c_{prim}$ & $c_{join}$ & $c_{dual}$ & $\rho_{cu}$ & $\rho_{cp}$ & $\rho_{up}$\\
\midrule
$\langle\cdot\rangle_A$ &$0.52$&$ 0.05 $&$ 0.42 $&$ 0.01 $&$ 1.48 $&$ 3.09 $&$ 5.77 $&$ 0.56 $&$ -0.02 $&$ -0.14 $\\
\midrule
$\langle\cdot\rangle_i$ &$ 0.51$&$ 0.04 $&$ 0.44 $&$ 0.004 $&$ 1.84 $&$ 2.84 $&$ 5.24 $&$ 0.49 $&$ -0.17 $&$ -0.19 $\\
$\sigma_i(\cdot)$ &$ 0.1 $&$ 0.08 $&$ 0.13 $&$ 0.006 $&$ 0.89 $&$ 0.81 $&$2.20$&$ 0.22 $&$ 0.17 $&$ 0.23 $\\
\bottomrule
\bottomrule
\end{tabular}
\caption{Parent-union-child statistics and correlations of the Kinsources networks. Values are reported for the aggregate dataset $A_G = \bigsqcup_{i=1}^{105}G_i$ (first row), along with the mean $\langle\cdot\rangle$ and standard deviation $\sigma(\cdot)$ across the 105 individual Kinsources datasets $\{G_i\}_{i=1}^{105}$ (second and third rows). Columns report, from left to right, the fractions of non, primary, joint, and dual parents given by $p_{non}$, $p_{prim}$, $p_{join}$, and $p_{dual}$, respectively. Next are the corresponding average numbers of children, $c_{prim}$, $c_{join}$, and $c_{dual}$, for the primary, joint, and dual parents, respectively. Last, the correlation coefficients between an individual's number of children and unions $\rho_{cu}$, children and parents $\rho_{cp}$, and unions and parents $\rho_{up}$ are shown.}    
\label{tab:2}
\end{table}

\textbf{Parent-Union-Child Statistics and Correlations:} One purpose for investigating parent-union-child data is to characterize patterns of information loss in genealogical networks. There are two structurally identifiable situations in which information loss is guaranteed. First, an individual with fewer than two recorded parents necessarily reflects missing parental information. Second, an individual who has at least one child but no recorded partner indicates missing union information. In contrast, missing children cannot be inferred from network structure alone as the absence of descendants, for instance, may reflect either incomplete records or true childlessness.

To examine how guaranteed forms of information loss (parental and union) relate to potential information loss (children), we compute several quantities and their pairwise correlations. We classify individuals into four parent types and denote the corresponding fractions by $p_{non}$, $p_{prim}$, $p_{join}$ and $p_{dual}$: \textit{Non}-parents have no recorded children; \textit{primary} parents have at least one child for whom only one parent is recorded; \textit{joint} parents have at least one child for whom two parents are recorded; \textit{dual} parents have at least one child of each type. We let $c_{prim}$, $c_{join}$, and $c_{dual}$ denote the average number of children for primary, joint, and dual parents, respectively. 

For the aggregate dataset $A_G$, we find $p_{non}=0.52$, $p_{prim}=0.05$, $p_{join}=0.42$, and $p_{dual}=0.01$. The corresponding average numbers of children are $c_{prim}=1.48$, $c_{join}=3.09$, and $c_{dual}=5.77$ (see Table \ref{tab:2}). We observe that the average values of $p_{\mathrm{non}}, p_{\mathrm{prim}}, p_{\mathrm{join}},$ and $p_{\mathrm{dual}}$ are similar across the aggregate and individual datasets, although the variance across individual datasets is nonzero, indicating moderate cross-dataset variability. This variation becomes more apparent in the temporal analysis presented in Section~\ref{sec:hallmarks} (cf. Figure~\ref{fig:11} (left)).

Additionally, these statistics indicate a clear association between the number of recorded unions and the number of recorded children. To understand how parent, union, and child data are correlated, we define the following. Let $\rho_{cu}$ be the correlation coefficient between an individual’s number of children and number of unions, $\rho_{cp}$ be the correlation coefficient between an individual's number of children and number of recorded parents, and $\rho_{up}$ be the correlation coefficient between an individual's number of unions and number of parents. (In each case, we use the Pearson correlation coefficient.) Together, these measures quantify how different aspects of genealogical completeness co-vary across individuals.

For the aggregate dataset, we obtain $\rho_{cu} = 0.56$, $\rho_{cp} = -0.02$, and $\rho_{up} = -0.14$ (see Table~\ref{tab:2}), where summary statistics for the individual datasets $\{G_i\}_{i=1}^{105}$ are also available. As might be expected, there is a moderate positive correlation between an individual’s number of recorded children and number of recorded unions. In contrast, the correlation between the number of recorded parents and the number of children is negligible. More notably, we observe a weak negative correlation between the number of unions and the number of recorded parents, indicating that individuals with more recorded unions tend, on average, to have fewer recorded parents. We revisit this relationship later in Section \ref{sec:hallmarks}, where we use temporal analysis to further study this phenomenon. 

\section{Temporal Hallmarks of Genealogical Networks}\label{sec:data}

Genealogical data naturally admits a partial temporal ordering induced by parent–child relationships: if a directed edge $(i,j)\in E_C$, then the parent $i$ necessarily precedes the child $j$ in time ($i\prec j$).  Leveraging this partial order, we define a sequence of pseudogenerations for a genealogical network, which serves as a proxy for its underlying generational structure (see Equation \eqref{eq:pop} and Theorem \ref{feasible} in Section \ref{sec:pseudo}). These pseudogenerations form discrete layers in genealogical networks that provide an approximation of the network’s temporal evolution, even when explicit temporal data is incomplete or unavailable. We use this decomposition to analyze the time-dependence of the structural quantities introduced in Section \ref{sec:spat} and to relate their behavior to information loss, etc. in genealogical networks (see Section \ref{sec:hallmarks}).

\subsection{Temporal Methods: Pseudogenerations}\label{sec:pseudo}

The organization of graphs into temporal layers dates back to the work of \cite{Sugiyama}, which laid the foundation for hierarchical graph drawing. This framework was later extended in \cite{Gansner} through the network simplex method, forming the basis of the widely used Graphviz package. Subsequent refinements have enabled the layering of larger graphs with provably optimal edge lengths \cite{Eiglsperger}, while more general ranking methods---such as those designed for multitree datasets---have further improved efficiency and flexibility \cite{Marik17, Ruegg}.

\begin{figure}
\begin{center}
    \begin{tabular}{c}
        \begin{overpic}[scale=0.15]{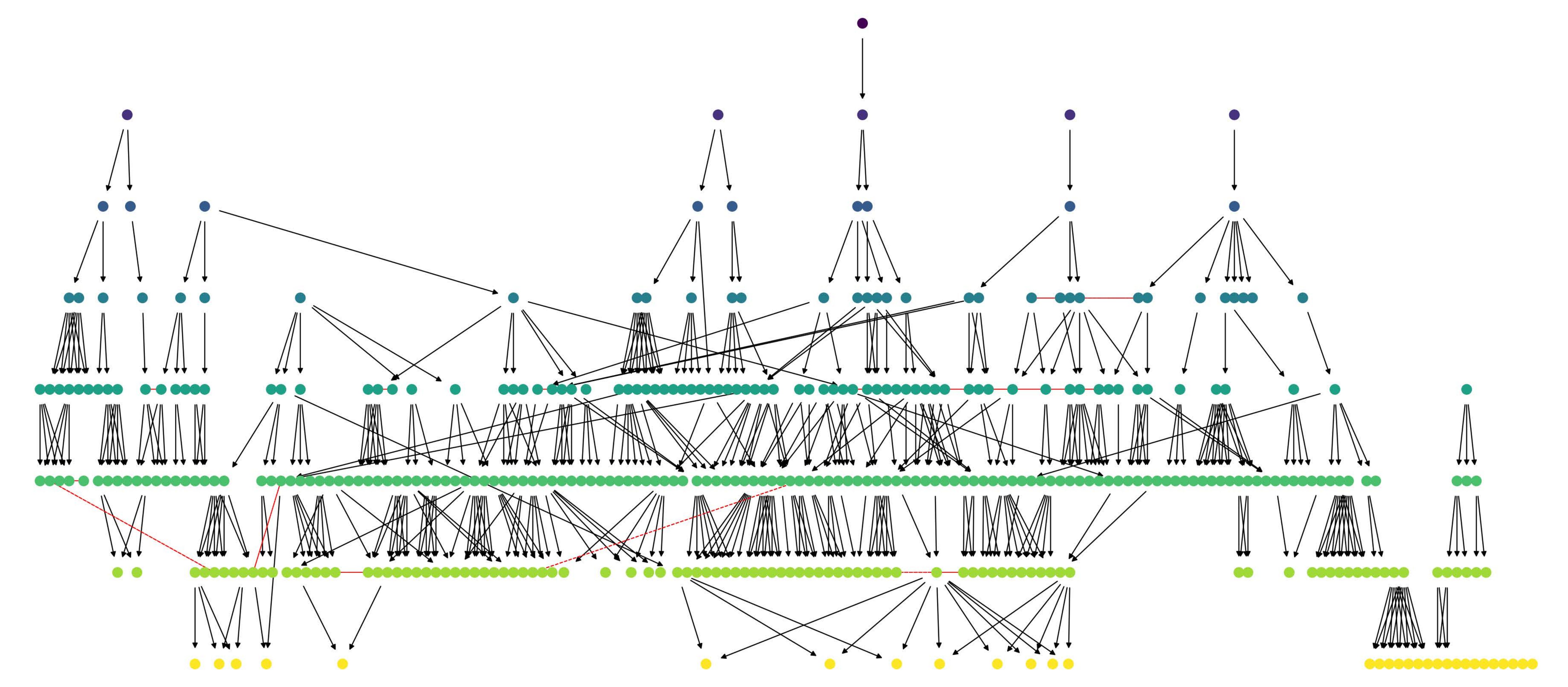}
        \put(22,-2){Layered Dogon--Konsogu--Donyu Genealogical Network}
        \end{overpic}
    \end{tabular}
\caption{The layered version of the Dogon--Konsogu--Donyu genealogical network from Figure \ref{fig:1} is shown, consisting of eight pseudogenerations generated via the pseudogeneration algorithm (see Equation \eqref{eq:pop}). As before, the edges are composed of undirected union edges, shown in red, and directed parent-child edges, shown in black. Each node's depicted color and vertical coordinate correspond to its pseudogeneration assignment.}\label{fig:7}
\end{center}
\end{figure}

Several prior approaches adapt layering techniques to genealogical networks by assigning individuals to discrete generations \cite{Bezerianos, McGuffin}. However, these methods often do not scale to large networks (e.g. those with more than $10^4$ vertices) and struggle to capture component-wide temporal patterns such as generational drift (discussed below), particularly in sparse or incomplete datasets.

To address these limitations, we introduce a new layering method formulated as an optimization problem (see Equation \eqref{eq:pop}). We prove that this problem is always feasible (Theorem~\ref{feasible}).  Although optimal assignments are not unique, the algorithm returns a deterministic pseudogeneration assignment once the node ordering and alignment-component convention are fixed (see Proposition~\ref{unique_ordering}). Moreover, as our method is posed as a constrained optimization problem, it can be readily adapted to incorporate user-specified layering preferences. 

Empirically, we characterize the runtime of the algorithm on the Kinsources datasets, which include networks larger than those considered in recent Graphviz-based approaches \cite{Marik16}. Together, these results establish pseudogenerations as a theoretically grounded framework for temporal analysis in moderate-scale genealogical networks and in larger-scale networks when approximate solutions are sufficient (cf. Figure \ref{fig:costplots}).

To describe our algorithm, let $G=(N,E_C\cup E_U)$ be a genealogical network. We denote by $C(G)=\{C_i\}_{i=1}^m$ the collection of network components (1-components) of $G$ when treated as an undirected graph. For each individual $i\in N$, let $p_i\subset N$ and $c_i \subset N$ be the set of parents and children of $i$, respectively, and let $u_i\subset E_U$ denote the set of unions incident to $i$. We define the pseudogeneration assignment as a solution to the following optimization problem.

\begin{center}
\textbf{Pseudogeneration Optimization Problem}
\end{center}


Let \(G=(N,E_C\cup E_U)\) be a genealogical network. For each \(i\in N\), let
\[
P_i=\{j\in N:(j,i)\in E_C\}, \qquad
C_i=\{j\in N:(i,j)\in E_C\}, \qquad
U_i=\{j\in N:\{i,j\}\in E_U\}.
\]
A pseudogeneration assignment is a solution of the following optimization problem:
\begin{subequations}\label{eq:pop}
\begin{alignat}{2}
\min_{g\in \mathbb{Z}_{> 0}^{N}} \quad
    & f(g)=\sum_{(i,j)\in E_C}(g_j-g_i)
    && \label{eq:pop-objective}\\
\text{subject to}\quad
    & g_j-g_i\ge 1,
    && \forall (i,j)\in E_C, \label{eq:con1}\\
    & g_k=g_\ell,
    && \forall k\in N \text{ such that } P_k=C_k=\emptyset
       \text{ and } U_k=\{\ell\}, \label{eq:con2}\\
    & g_i\in \mathbb{Z}_{> 0},
    && \forall i\in N. \label{eq:con3}
\end{alignat}
\end{subequations}
Here, $g_i$ denotes the pseudogeneration assigned to individual $i\in N$.
The objective \eqref{eq:pop-objective} minimizes total parent-child edge length.
Constraint \eqref{eq:con1} enforces temporal consistency along parent-child edges.
Constraint \eqref{eq:con2} assigns singleton union partners to the same pseudogeneration.
Constraint \eqref{eq:con3} forces pseudogenerations to be positive integers.
Note that, smaller pseudogenerations correspond to earlier generations, while larger indices correspond to more recent generations.




Below we establish several fundamental properties of the pseudogeneration optimization problem. In particular, we prove that it is always feasible under mild conditions.


\begin{theorem}\textbf{(Feasibility of the Pseudogeneration Optimization Problem)}\label{feasible}
Let $G=(N,E)$ be a genealogical network whose associated genetic network $\mathcal{G}=(N,E_C)$ is a directed acyclic graph (DAG). Then, the pseudogeneration optimization problem admits at least one feasible solution and Algorithm \ref{alg:pseudogens} constructs such a solution (see Algorithm \ref{alg:pseudogens} in the Appendix).
\end{theorem}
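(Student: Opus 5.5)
Feasibility only needs a vector $g\in\mathbb{Z}_{>0}^N$ satisfying \eqref{eq:con1}--\eqref{eq:con3}. The objective \eqref{eq:pop-objective} is a finite sum over a finite edge set and is bounded below by $|E_C|$ on the feasible set, by \eqref{eq:con1}. Hence a minimizer then also exists. The feasible values of $f$ are integers $\ge |E_C|$, so the minimum over a nonempty such set is attained.

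The plan is to build $g$ in two stages.

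\textbf{Stage 1 (longest-path layering).} Since $\mathcal{G}=(N,E_C)$ is a DAG, it admits a topological ordering $v_1,\dots,v_n$ in which every arc $(i,j)\in E_C$ has $i$ before $j$. Process the vertices in this order and set
\[
g_{v}=1+\max\{g_u : u\in P_{v}\},
\]
with the convention $\max\emptyset=0$. This is well defined because all parents of $v$ precede $v$ in the order. By construction every $g_v\ge 1$, and for each $(i,j)\in E_C$ we have $g_j\ge g_i+1$. So \eqref{eq:con1} and \eqref{eq:con3} hold.

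\textbf{Stage 2 (singleton partners).} Let $S=\{k\in N: P_k=C_k=\emptyset,\ U_k=\{\ell\}\}$. For each $k\in S$, redefine $g_k:=g_\ell$. Three checks are needed.
\begin{enumerate}
\item No arc of $E_C$ touches $k$, so \eqref{eq:con1} is unaffected by changing $g_k$.
\item Positivity is preserved, since $g_\ell\ge 1$.
\item The assignment is consistent: no partner $\ell$ of some $k\in S$ is itself redefined through a chain. If $\ell\in S$, then $U_\ell=\{k\}$, so $k$ and $\ell$ form an isolated union pair. For such a pair we set both values to $1$ (or to the common value $g_k=g_\ell=1$ already given by Stage 1, since both are parentless). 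In every other case $\ell\notin S$, so $g_\ell$ is fixed after Stage 1 and the assignment is unambiguous. Each $k\in S$ has exactly one partner, so no vertex receives conflicting values.
\end{enumerate}
This gives a feasible point.

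\textbf{Connecting to Algorithm \ref{alg:pseudogens}.} The remaining step is to show that the algorithm's output satisfies the same three constraints. The intended argument is an invariant argument. Whatever component-wise shifting or alignment the algorithm performs, it should (a) keep every parent--child difference $\ge 1$, (b) keep values positive, possibly after a final translation by a constant per component, and (c) assign singleton partners to equal layers.

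I expect this step to be the main obstacle, because it depends on the exact form of the algorithm, which is stated in the Appendix. Two points need care. First, translating a whole 1-component by a constant preserves differences, so any renormalization to positive integers is harmless. Second, if the algorithm tightens edges (for example, by network-simplex style moves that reduce $f$), each move should be checked to preserve \eqref{eq:con1}, which typically holds because each move only lowers $g_j-g_i$ down to $1$ and never below. The existence argument above already establishes that the problem is feasible, independently of these details.
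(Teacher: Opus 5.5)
Your existence argument is correct and is essentially the paper's. You build a layering from a topological order, reset singleton partners, and note that $f$ is integer-valued and at least $|E_C|$. The only difference is that you use longest-path layering where the paper uses the position in the ordering; this has the minor benefit that an isolated singleton pair already sits at the common value $1$.

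The gap is the second half of the statement: that Algorithm~\ref{alg:pseudogens} actually constructs a feasible solution. You leave this open, and your sketch (``should'', ``typically holds'') is not an argument. Two things are needed.

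\textbf{First pass.} You must identify what the Graphviz passes solve.
\begin{itemize}
\item Union edges carry \texttt{constraint=false}, so they impose nothing.
\item \texttt{rank=same} encodes \eqref{eq:con2}.
\item The remaining rows are therefore difference constraints, each with one $+1$ and one $-1$, which form a totally unimodular system.
\end{itemize}
Your Stage~1--2 point certifies that this program is feasible, and $f\ge |E_C|$ bounds it below. Hence network simplex, run to convergence, returns an integral optimal and in particular feasible point. There is no need to check individual pivots: the standard convergence guarantee for network simplex covers this.

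\textbf{Second pass.} This is the step you genuinely have to prove. The second pass adds the requirement that $v_1^*,\dots,v_q^*$ share a rank, and you must show the enlarged problem is still feasible.
\begin{itemize}
\item Every arc of $E_C$ and every edge of $E_U^{\mathrm{sing}}$ has both endpoints in a single component $C_w$ of $H$.
\item Set $\omega=\max_i g^{(1)}_i$ and shift each $C_w$ by the nonnegative amount $\omega-g^{(1)}_{v_w^*}$.
\item This shift preserves \eqref{eq:con1}, \eqref{eq:con2} and $f$, and places every $v_w^*$ at rank $\omega$.
\end{itemize}
So the second pass is feasible, with the same optimal value. The final global translation $g_i\gets g_i-\min_j g_j+1$ then restores \eqref{eq:con3}.

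Your proposed invariant, translating a whole 1-component of $G$, uses the wrong unit. Alignment components, the components of $H$, can be strictly finer than 1-components of $G$, for example two of them joined only by a non-singleton union edge. The algorithm shifts these independently, which is legitimate only because such union edges carry no constraint. Similarly, positivity needs a single global translation, not one per component.
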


A proof of Theorem~\ref{feasible} is provided in the Appendix, together with pseudocode for the algorithm used to solve the pseudogeneration optimization problem in our numerical experiments. We also show that, when run to completion, the procedure yields a unique optimal solution (Proposition~\ref{unique_ordering}). This uniqueness result holds after applying the following component alignment convention.

Given an initial pseudogeneration assignment $\mathbf{g}\in\mathbb{Z}{>0}^N$, let $\omega=\max{i\in N} g_i$ denote the largest pseudogeneration in the network. Each alignment component (see Section~\ref{sec:6.2}) is then shifted by a constant amount so that its final pseudogeneration is $\omega$. This aligns the most recent known information across all network components. To remove the remaining translational ambiguity, the labeling is subsequently normalized by setting the earliest pseudogeneration equal to one; that is, $\min_{i\in N} g_i = 1$. These transformations preserve the objective value $f(\mathbf{g})$ and together constitute the \textit{component alignment convention} described in Section~\ref{sec:6.2}. An example of the resulting labeling for the Dogon--Konsogu--Donyu genealogical network is shown in Figure~\ref{fig:7}.

Algorithm~\ref{alg:pseudogens} is based on the network simplex method as implemented in Graphviz \cite{Gansner}. As with the classical simplex algorithm, its theoretical time complexity is difficult to characterize. While asymptotic bounds exist for certain primal variants \cite{Orlin, Tarjan}, these do not apply to the Graphviz implementation used here. Consequently, in the absence of known polynomial-time guarantees, we assess the algorithm’s computational complexity empirically.

Figure~\ref{fig:costplots} shows the runtime of the pseudogeneration algorithm across the 105 genealogical networks from Kinsources on a log--log scale. Runtime increases approximately monotonically with the number of individuals $n=|N|$ in a given dataset. Of these networks, 103 complete in under $T \approx 10^5$ seconds, while the two largest---the Watchi and the San Marino datasets, where $n>28\text{ }000$---exceed this threshold. For these datasets, we use approximate pseudogenerations in the subsequent temporal analysis (see Section \ref{sec:hallmarks}). We note that these approximations are empirically near-optimal: increasing the algorithm’s iterations from ten to one thousand reduces the objective by at most $0.4\%$ for these largest two networks.

\begin{figure}
\begin{center}
    \begin{tabular}{cc}
        \begin{overpic}[scale=0.3175]{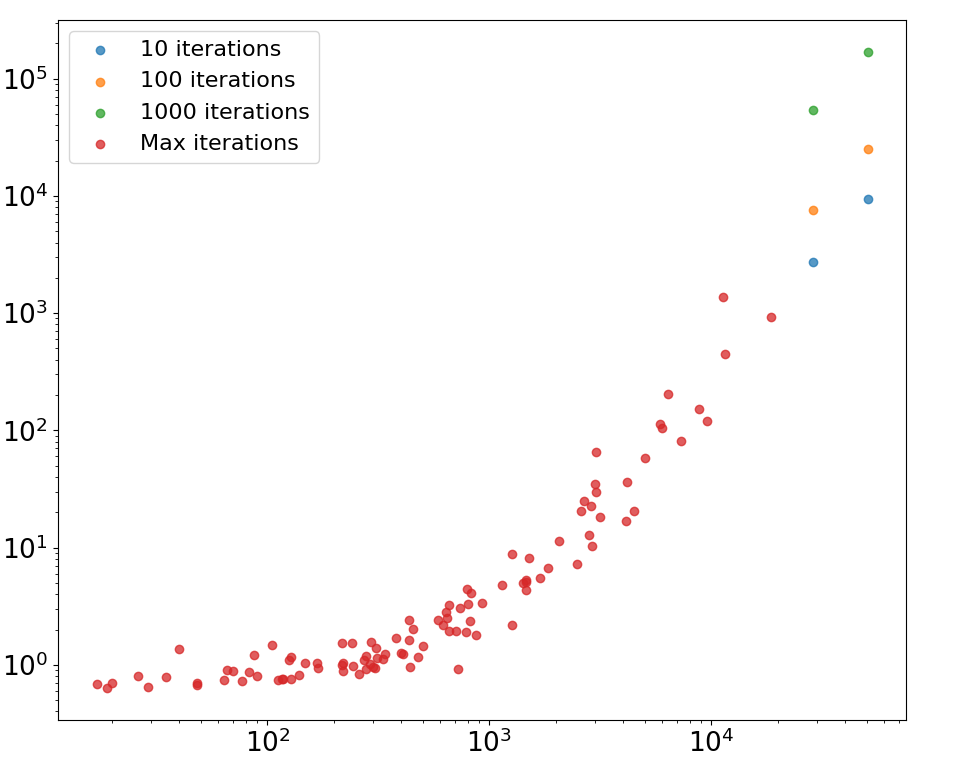}
        \put(32.75,-2.75){\scriptsize{$\mathrm{Number \ of \ Individuals}$}}
        \put(-4.5, 30){\tiny{\rotatebox{90}{$\textrm{Time (Seconds)}$}}}
        \put(17, -7.5){\scriptsize{Pseudogeneration Computational Runtime}}
        \end{overpic} 
    \end{tabular}
    \vspace{0.35cm}
\caption{Runtimes versus network size for the pseudogeneration solver using Algorithm~\ref{alg:pseudogens} (see Appendix) across all 105 Kinsources datasets $\{G_i\}_{i=1}^{105}$ are shown on a log--log scale. The algorithm converges in under $T \approx 10^5$ seconds for all but the two largest datasets (shown in red). For the two largest datasets, approximate pseudogenerations are computed using 10, 100, and 1000 internal iterations of network simplex in Algorithm~\ref{alg:pseudogens}, with corresponding runtimes indicated in blue, orange, and green, respectively.}\label{fig:costplots}
\end{center}
\end{figure}

The growth observed in Figure~\ref{fig:costplots} suggests that the empirical runtime of the pseudogeneration optimization problem increases faster than polynomially over the range of network sizes considered here. While the precise source of this computational cost remains unclear, a likely contributing factor is \textit{generational drift}---the variation in distances between lineages to a common ancestor---which complicates the determination of an optimal pseudogeneration assignment.

To highlight the flexibility of the pseudogeneration optimization problem, we refer to the procedure defined by Equation~\eqref{eq:pop} as \emph{biological sorting}, as all children of a parent are placed as close to the parents as possible. However, the optimization problem can also be readily modified to include distances between partners, or more generally to optimize a weighted combination of parent-child and union-based objectives. Such \textit{hybrid sortings} could, in principle, refine the temporal predictions presented in the remainder of the paper. Determining appropriate weights for these competing criteria, however, remains an open problem.

\subsection{Temporal Hallmarks}\label{sec:hallmarks}

Using the pseudogeneration algorithm described in the previous section, we infer an approximate temporal structure on the genealogical datasets under study. This construction enables a systematic analysis of how the structural features introduced in Section~\ref{sec:spat} vary across pseudogenerations. Our objectives in this section are twofold: (i) to characterize the temporal evolution of genealogical network structure, and (ii) to quantify how genealogical information is progressively lost when moving backward from recent to earlier pseudogenerations.

To address these objectives, we apply the pseudogeneration algorithm (see Algorithm \ref{alg:pseudogens}) to the aggregate network $A_G=(N_A,E_A)$ formed from the 105 Kinsources datasets. While alternative approaches for jointly assigning generational structure across multiple genealogical networks are possible, this algorithm provides a consistent and interpretable temporal reference frame. It allows us to track changes in network structure and genealogical completeness as a function of depth into the past, enabling a comparative analysis of structure and information loss across datasets.

For the aggregate dataset $A_G = (N_A, E_A)$, let $A_G(d) = (N_d, E_d)$ denote the restriction of $A_G$ to pseudogenerations $1$ through $d$; that is, $A_G(d)$ contains all individuals assigned to pseudogenerations at most $d$, along with all recorded edges between them. We refer to $A_G(d)$ as the depth-$d$ aggregate network, representing the first $d$ pseudogenerations of the aggregate dataset. All network statistics introduced in Section~\ref{sec:spat} can be evaluated on these depth-restricted networks as $d$ ranges from the first generation $d=1$ to the final generation (e.g. $d=40$ for the aggregate dataset).\\

\begin{figure}
    \centering
    \begin{tabular}{cc}
        \begin{overpic}[scale=0.2000]{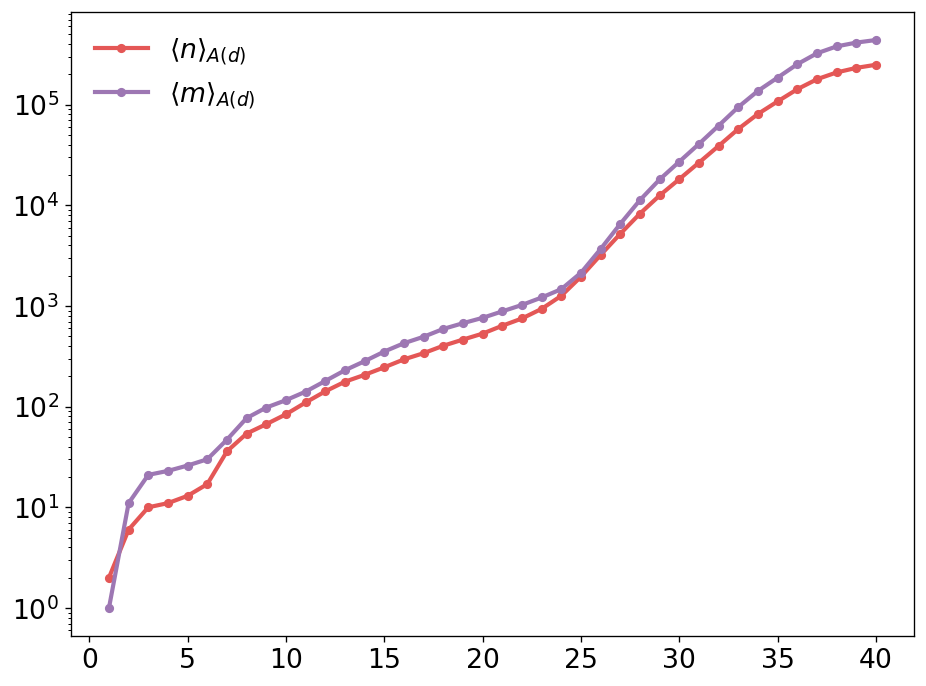}
        \put(20,-8.5){\scriptsize{Individuals and Edges Over Time}}
        \put(33,-3){\scriptsize{$\mathrm{Pseudogenerations}$}}
        \put(-4,10){\tiny{\rotatebox{90}{$\textrm{Number of Individuals and Edges}$}}}
        \end{overpic}
        \hspace{0.15cm}
        \begin{overpic}[scale=0.2000]{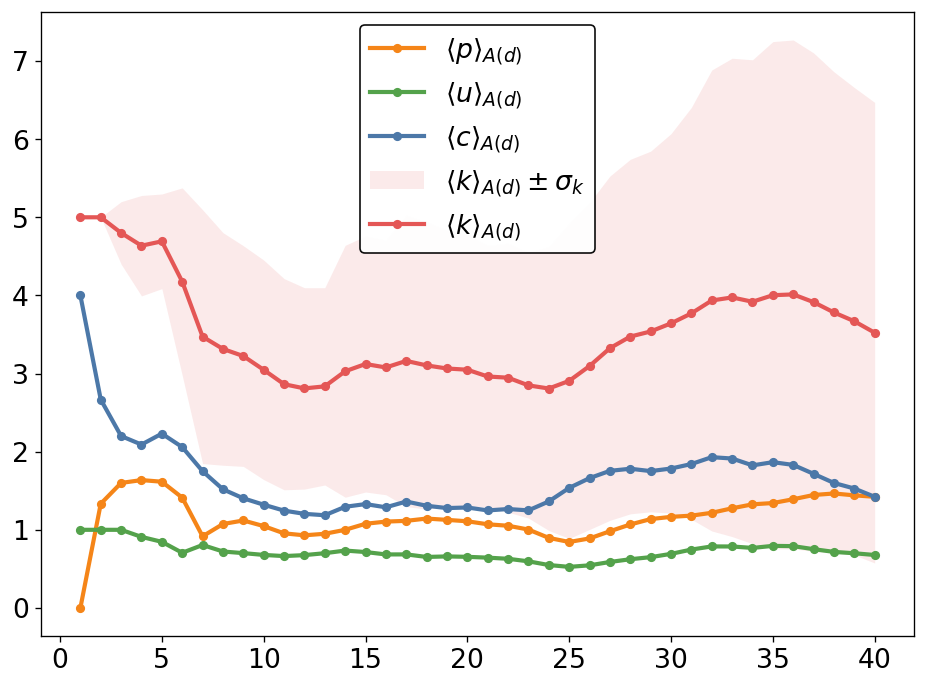}
       \put(24,-8.5){\scriptsize{Average Degree Over Time}}
        \put(33,-3){\scriptsize{$\mathrm{Pseudogenerations}$}}
        \put(-4,24){\tiny{\rotatebox{90}{$\textrm{Average Number}$}}}
        \end{overpic} 
    \end{tabular}
    \vspace{0.3cm} 
    \caption{(Left) The number of individuals $\langle n\rangle_{A(d)}$ and edges $\langle m\rangle_{A(d)}$ in the depth-$d$ aggregate network $A(d)=A_G(d)$ for $d=1,\dots,40$ are shown in red and violet, respectively. (Right) The average number of parents $\langle p\rangle_{A(d)}$, unions $\langle u\rangle_{A(d)}$, children $\langle c\rangle_{A(d)}$, and total degree $\langle k\rangle_{A(d)}$ in the depth-$d$ aggregate network $A(d)=A_G(d)$ for $d=1,\dots,40$ shown in orange, green, blue, and red, respectively. The standard deviation $\langle k\rangle_{A(d)}\pm\sigma_k$ of the average total degree $\langle k\rangle_{A(d)}$, which gives the standard deviation of the parents, unions, and children combined, is also shown in shaded red.}\label{fig:8}
\end{figure}

\textbf{Temporal Size and Degree Features:} The growth of the number of individuals and edges of the depth-$d$ aggregate datasets are shown in Figure \ref{fig:8} (left). Because Figure 8 (left) is shown on a log-linear scale, the number of individuals and edges grows approximately exponentially across pseudogenerations. The temporal evolution of the average number of parents, unions, and children for each individual in the depth-$d$ aggregate dataset is shown in Figure \ref{fig:8} (right). After an initial transient period, these quantities remain relatively stable across pseudogenerations, suggesting that the average recorded numbers of parents, unions, and children vary only modestly over most of the temporal range. Also shown are the average and standard deviation of the total degree, corresponding to the combined number of parents, unions, and children per individual. (We note that the average number of recorded parents equals the average number of recorded children in the final generation $d=40$, since both quantities are equal to the number of recorded parent-child edges divided by the total number of individuals in $A_G$.)\\  

\textbf{Temporal Assortativity and Clustering Features:} Figure \ref{fig:tempassortclust} shows the temporal evolution of the aggregate-network assortativity, genetic assortativity, and global clustering coefficient. After an initial transient period, all three quantities exhibit an overall decline: assortativity decreases from approximately $r_G \approx 0.3$ to $-0.1$, genetic assortativity from $r_{\mathcal{G}} \approx 0.3$ to $-0.16$, and the clustering coefficient from $C_G \approx 0.75$ to $0.23$. The behavior of $r_G$ and $r_{\mathcal{G}}$ suggests that family sizes are positively correlated between parents and children in earlier generations ($8<d<15$), become largely uncorrelated in intermediate generations ($15<d<25$), and eventually become negatively correlated in later generations ($25<d<40$). One possible explanation for the decline in assortativity is the increasing variance in recorded family size (total degree) over time, which is evident in Figure \ref{fig:8} (right). This suggests that increasing heterogeneity in family size may contribute to the observed reduction in assortativity. In contrast, the decrease in clustering is not simply explained by a decline in the number of unions, since Figure \ref{fig:8} (right) shows no corresponding reduction. This indicates that other mechanisms likely contribute to the observed decrease in clustering over time.\\

\begin{figure}
\centering
    \begin{tabular}{c}
        \begin{overpic}[scale=0.25]{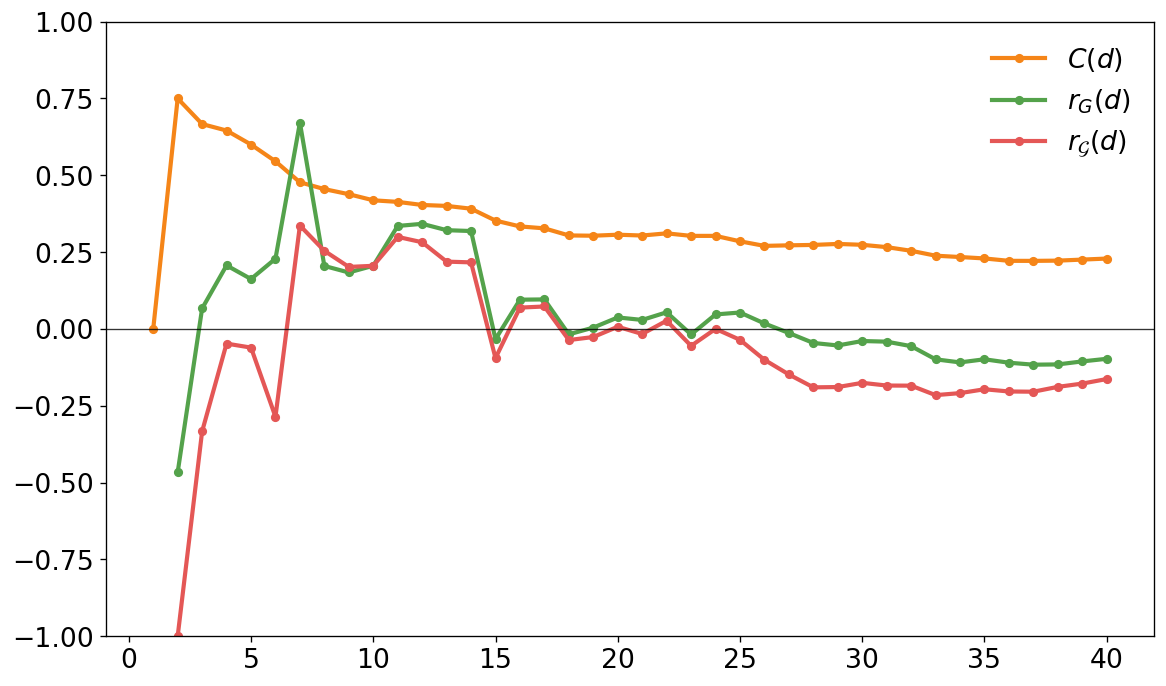}
        \put(42,-2.25){\scriptsize{{Pseudogeneration}}}
        \put(-4, 19){\scriptsize{\rotatebox{90}{$\textrm{Correlation Value}$}}}
        \put(28.5, -6.5){\scriptsize{Assortativity, and Clustering Over Time}}
        \end{overpic} 
    \end{tabular}
    \vspace{0.35cm}
\caption{Temporal evolution of the aggregate network's assortativity $r_G(d)$, genetic assortativity $r_{\mathcal{G}}(d)$, and global clustering coefficient $C_G(d)$ for $d=1,\dots,40$. After an initial transient regime, all three measures exhibit a persistent downward trend, with assortativity measures transitioning from positive to negative values in later generations (cf. Table 2).}\label{fig:tempassortclust}
\end{figure}

\textbf{Temporal Distance Features:} The average distances $\ell_s$, $\ell_g$, and $\ell_u$ for the depth-$d$ aggregate datasets are shown in Figure \ref{fig:9} (left) in blue, green, and orange, respectively. Here, pseudogenerations are indicated by dots, while the horizontal axis is given in terms of population size, allowing the average distances to be viewed as functions of the number of individuals $n$. The approximately logarithmic growth of $\ell_g$ and $\ell_u$ suggests that the aggregate network exhibits the small-world property with respect to both genetic distance and distance to union. In contrast, the standard distance $\ell_s$ does not exhibit this behavior. Instead, throughout most of the network's growth, $\ell_s$ decreases asymptotically toward $\ell_s \approx 11$. As this decrease is not observed in genetic distance, this decrease is likely driven by union edges that create shortcuts between previously distant individuals, reducing the overall average distance at least over the scales considered here (i.e. $5000<n<250\text{ }000$). Whether this decreasing trend persists indefinitely remains unknown; however, the same qualitative behavior is observed in the larger Kinsources datasets, suggesting that it is not an artifact of aggregation.

The fourth distance, $\ell_p$, shown in Figure \ref{fig:9} (left) in dashed red, is the average genetic distance from an individual to all other individuals within $\pm2$ generations. We consider this quantity because approximately $96.4\%$ of all unions in the aggregate dataset occur within this generational range. Accordingly, we refer to $\ell_p$ as the average distance to \textit{potential unions}. As shown in Figure \ref{fig:9} (left), $\ell_p$ closely tracks the overall genetic distance $\ell_g$, indicating that the average genetic distance to all individuals is well approximated by the average genetic distance restricted to individuals within $\pm2$ generations. This suggests that the dominant contribution to genetic distance arises from nearby generations, despite the much larger size of the aggregate network.

\begin{figure}
    \centering
    \begin{tabular}{cc}
        \begin{overpic}[scale=0.22]{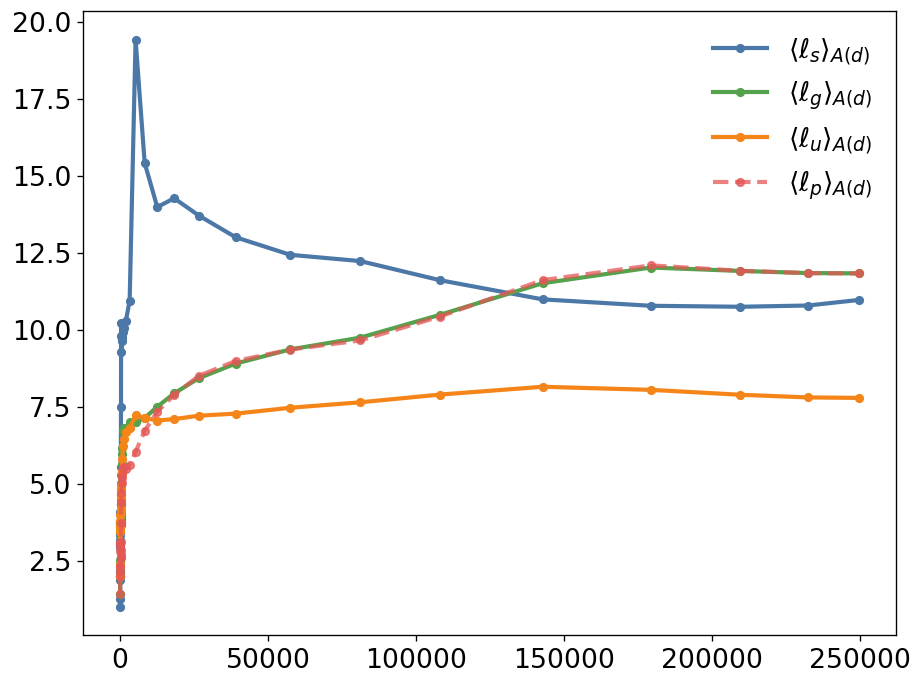}
        \put(24.5,-9){\scriptsize{Distance Over Population and Time}}
        \put(45,-3){\scriptsize{$\mathrm{Population}$}}
        \put(-3,33){\tiny{\rotatebox{90}{$\textrm{Distance}$}}}
        \end{overpic}
        \hspace{0.05cm}
        \begin{overpic}[scale=0.2235]{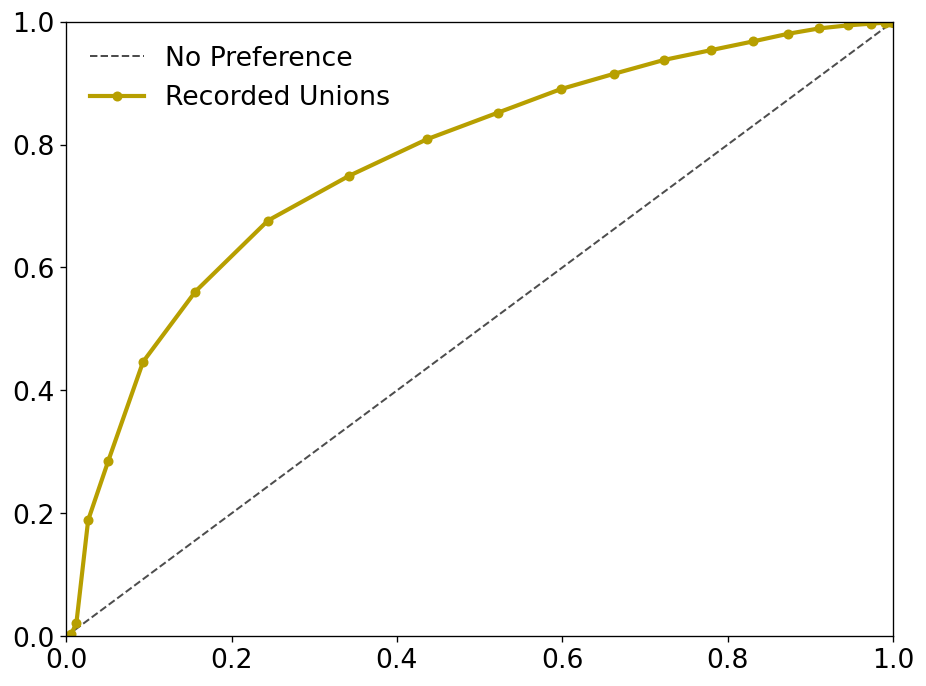}
        \put(6,-8.5){\scriptsize{Union Concentration Over Genetic Distance Percentile}}
        \put(28,-3){\scriptsize{$\mathrm{Fraction \ of \ Potential \ Unions}$}}
        \put(-3, 26){\tiny{\rotatebox{90}{$\textrm{Fraction of Unions}$}}}

                \end{overpic} 
        \end{tabular}
    \vspace{0.45cm} 
    \caption{(Left) The average of the standard $(\ell_s)$, genetic $(\ell_g)$, union $(\ell_u)$, and potential union $(\ell_p)$ distance in the depth-$d$ network $A(d)$ are plotted in blue, green, orange, and dashed red, respectively. The potential union distance is the average genetic distance from an individual to all other individuals within $\pm2$ generations. Here, the pseudogenerations $d=1,\dots,40$ are represented as dots. The genetic distance $\ell_g$ and union distance $\ell_u$ exhibit approximately logarithmic growth, indicating a small-world phenomenon, while the standard distance $\ell_s$ decreases toward a limiting value over the observed range. (Right) The fraction of unions (vertical axis) that form within a given fraction of the closest potential unions (horizontal axis) is shown in yellow. This curve of recorded unions lies well above the dashed baseline ($y=x$), indicating a strong preference for short genetic distances in union formation.}
    \label{fig:9}
\end{figure}         

As the pseudogeneration increases, the average distance to union and the average distance to potential unions approach $\ell_u \approx 8$ and $\ell_p \approx 12$, respectively. This indicates that unions tend to occur at substantially shorter genetic distances than the majority of potential unions. This preferential formation of unions at short genetic distances is illustrated in Figure \ref{fig:9} (right), where the cumulative fraction of recorded unions (vertical axis) is plotted against the percentile of the genetically closest potential unions (horizontal axis). The dashed diagonal line ($y=x$) represents the null model of no preference, corresponding to unions forming uniformly among potential pairs. Because the empirical curve lies substantially above this baseline, unions are strongly concentrated among genetically close individuals. For example, the point $(0.1563,0.5)$ indicates that $50\%$ of all recorded unions occur within the genetically closest $15.63\%$ of potential pairings in the Kinsources dataset.\\

\textbf{Temporal Parental Features:} The fractions of non-parents, primary parents, joint parents, and dual parents in the depth-$d$ aggregate network are shown in Figure~\ref{fig:11} (left) as functions of pseudogeneration depth $d$. The proportion of non-parents exhibits a generally increasing trend as later pseudogenerations are included, rising from zero percent in the earliest depth-restricted network to 52\% in the full aggregate network. In contrast, the fraction of joint parents shows the opposite behavior, decreasing from comprising all individuals in the earliest depth-restricted network to 42\% in the full aggregate network. The fraction of primary parents has a somewhat unimodal shape, beginning and ending near zero and peaking near 40\%. The fraction of dual parents remains nearly constant at roughly 0.01. 

While the reasons for these trends are currently unknown, one possible contributing factor is the alignment-component convention in the pseudogeneration algorithm, which shifts alignment components toward later pseudogenerations. These components could, in principle, contain relatively fewer joint parents and more non-parents which, if true, would strengthen the idea that component size is correlated with information loss in genealogical networks. Similarly, the unimodal distribution of primary parents may reflect a transition in data completeness over time: in earlier generations, partially observed parent information (i.e. primary parents) may degrade into missing information where neither parent is known, while in later generations this partial information may instead resolve into more complete records in which both parents are identified (i.e. joint parents).\\ 

\begin{figure}
\begin{center}
    \begin{tabular}{ccc}
        \begin{overpic}[scale=0.2200]{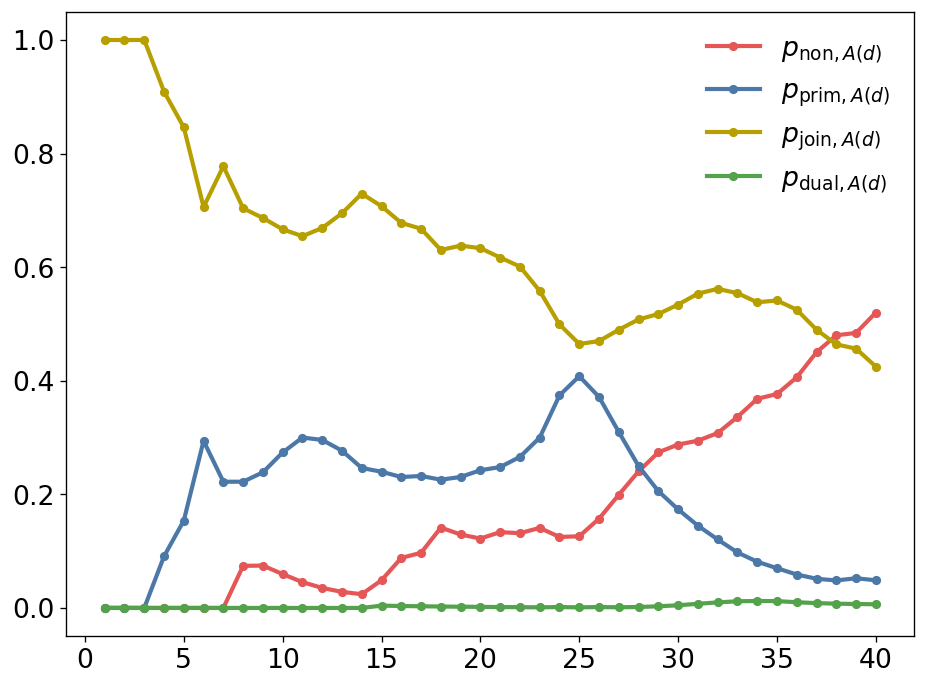}
        \put(32.5,-8.5){\scriptsize{Parent Type Over Time}}
        \put(38,-3){\scriptsize{$\mathrm{Pseudogenerations}$}}
        \put(-4,21){\tiny{\rotatebox{90}{$\textrm{Fraction of Parent Type}$}}}
        \end{overpic} \hspace{0.05cm}
        \begin{overpic}[scale=0.2200]{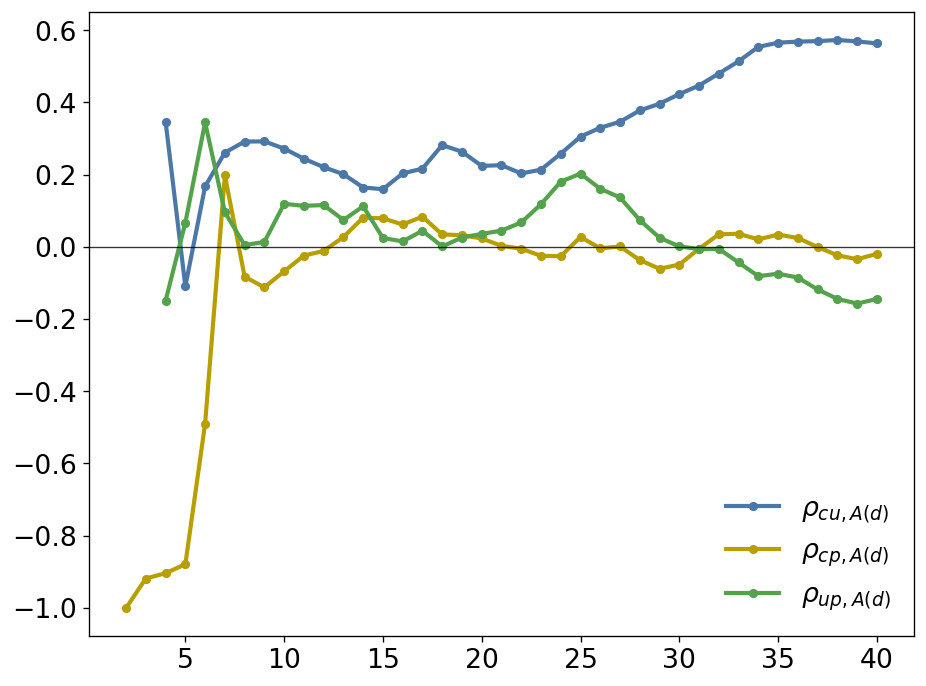}
        \put(24,-8.5){\scriptsize{Correlation Coefficients Over Time}}
        \put(38,-3){\scriptsize{$\mathrm{Pseudogenerations}$}}
        \put(-2,31){\tiny{\rotatebox{90}{$\textrm{Correlation}$}}}
        \end{overpic}
    \end{tabular}
    \vspace{0.35cm}
\caption{(Left) The fraction of non-parents ($p_{non}$), primary parents ($p_{prim}$), joint parents ($p_{join}$), and dual parents ($p_{dual}$) in the depth-$d$ network $A(d)$ are shown as functions of pseudogeneration depth $d$ in red, blue, yellow, and green, respectively. (Right) The correlations between number of recorded unions and children ($\rho_{cu}$), children and parents ($\rho_{cp}$), and unions and parents ($\rho_{up}$) in the depth-$d$ network $A(d)$ are shown as functions of pseudogeneration depth $d$ in blue, yellow, and green, respectively.}\label{fig:11}
\end{center}
\end{figure}

\textbf{Temporal Parent-Union-Child Correlations:} For the aggregate dataset, the correlations $\rho_{cu}$, $\rho_{cp}$, and $\rho_{up}$ are shown per pseudogeneration in Figure~\ref{fig:11} (right), wherein each exhibits a distinct trend. The correlation $\rho_{cu}$ between the number of children and the number of unions is generally positive and increases after generation 20, plateauing at approximately $\rho_{cu} \approx 0.56$ in the final generations. In contrast, the correlation $\rho_{cp}$ between the number of children and the number of parents fluctuates near zero beyond generation 10. Meanwhile, the correlation $\rho_{up}$ between the number of unions and the number of parents declines from roughly $\rho_{up} \approx 0.2$ around generation 25 to nearly $\rho_{up}\approx-0.2$ in the final generations.

The trend in $\rho_{cu}$ suggests that, over time, a greater number of recorded unions is associated with a greater number of recorded children, and vice versa. In contrast, the near-zero values of $\rho_{cp}$ across most pseudogenerations indicate that information about an individual’s parents provides little predictive value for the number of their children. One possible explanation for this weak relationship is that $\rho_{cp}$ links information separated by two generations---an individual's recorded parents and recorded children---whereas $\rho_{cu}$ and $\rho_{up}$ involve quantities typically separated by a single generation. The behavior of $\rho_{up}$ is more surprising, exhibiting an approximately monotonic decline from moderately positive to moderately negative values over the latter half of the pseudogenerations. The cause of this trend remains unclear; in particular, it is not yet understood why increased information regarding parents is associated with fewer recorded unions in later pseudogenerations.   

\section{Conclusion}\label{sec:conc}

In this paper, we develop a unified framework for the comparative analysis of genealogical networks that integrates structural and temporal perspectives through the notion of pseudogenerations. By analyzing more than one hundred genealogical datasets from the Kinsources repository, which span various cultures, historical periods, and recording practices, our intent is to move beyond individual case studies to investigate features that recur across human kinship networks.

Using this framework, we identify a collection of common structural and temporal hallmarks. These include scale-free--like degree and component-size distributions, multiscale family organization, small-world behavior with respect to genetic and union-based distances, consistent degree disassortativity, preferential union formation at short genetic distances, and recurring temporal and demographic patterns in genealogical records. We further show that 2-components provide a natural unit of genealogical structure, balancing connectivity and interpretability while preserving meaningful family relationships. Together, these findings provide evidence that diverse genealogical datasets share common organizational features despite substantial variation in cultural and historical context.

Beyond these empirical findings, the introduction of pseudogenerations enables the temporal analysis of otherwise static genealogical data. By extracting an inferred generational structure directly from network topology, this approach makes it possible to study how network properties evolve across generations and provides a unified method for integrating structural and temporal perspectives on kinship networks.

These results provide a foundation for the development of general models of genealogical networks. The recurrence of structural and temporal features across diverse datasets suggests there are organizing principles that persist across genealogical networks. More broadly, the framework introduced here creates opportunities for future work in comparative genealogy, predictive modeling, population dynamics, and the study of social organization through kinship structure. 

Building on this foundation, a central objective is to develop a quantitative theory of information loss in genealogical networks. Such a theory would make it possible to infer where genealogical information is missing, estimate how much information has been lost, and characterize the processes responsible for the loss. By linking observed network structure to patterns of missing information, this approach has the potential to improve the interpretation and reconstruction of incomplete genealogical records. Progress in this direction will be reported in forthcoming work.

\section{Acknowledgements} The authors used generative artificial intelligence tools to assist with language editing, phrasing, and improvement of manuscript readability, as well as to help review and debug computer code used in this research.

\section{Appendix}\label{sec:app}

Here, we give the rigorous statements and proofs of the results referenced in this paper. The first set of results are related to a genealogical network’s $k$-components (see Section \ref{sec:spat}). The second set of results refer to the pseudogeneration optimization problem (see Section \ref{sec:data}). 

\subsection{Structure of $k$-Components}
To characterize when $k$-components group or separate parents and children, we restrict our analysis to a class of genealogical networks with well-defined union structure. Specifically, a genealogical network $G=(N,E_U\cup E_C)$ is said to be \emph{union-complete} if:

(i) every union edge $\{i,j\} \in E_U$ has at least one recorded child $c\in N$, so that $(i, c), (j,c) \in E_C$;

(ii) every individual has either zero or two recorded parents;

(iii) whenever $i$ and $j$ are the two recorded parents of a child $c$, the union edge \(\{i, j\} \in E_U\) is present.

We note that although the aggregate dataset $A_G$ is not strictly union-complete, it is close to satisfying this property: approximately 72\% of partner pairs have a recorded child, and only 8.2\% of individuals have a single recorded parent.

In the union-complete setting, the following hold.

\begin{lemma}
\label{lemma:pc_edges}
Let $G=(N,E_U\cup E_C)$ be a union-complete genealogical network.  Then, every edge of $G$ belongs to a triangle in the underlying undirected graph consisting of one union edge and two parent-child edges.
\end{lemma}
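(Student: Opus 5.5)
The proof is a direct case analysis on the two edge types $E=E_U\cup E_C$, using the three defining conditions of union-completeness. In each case I will exhibit explicit vertices $i,j,c$ such that $\{i,j\}\in E_U$ and $(i,c),(j,c)\in E_C$. The undirected versions of these three edges form a triangle in the underlying undirected graph, and it has exactly the required composition: one union edge and two parent-child edges.

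\emph{Union edges.} Let $\{i,j\}\in E_U$. By condition (i) there is a recorded child $c\in N$ with $(i,c),(j,c)\in E_C$. Then $\{i,j\}$, $\{i,c\}$, $\{j,c\}$ form the desired triangle. The only thing to check is that $i,j,c$ are three distinct vertices, so that this is a genuine triangle. Certainly $i\neq j$, since a union edge is an edge between two partners. Also $c\neq i$ and $c\neq j$, because $E_C$ contains no loops; the genetic network $\mathcal{G}$ is assumed to be a DAG throughout.

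\emph{Parent-child edges.} Let $(i,c)\in E_C$. Then $c$ has at least one recorded parent, so by condition (ii) it has exactly two, namely $i$ and some $j\neq i$ with $(j,c)\in E_C$. Condition (iii) then gives $\{i,j\}\in E_U$. Distinctness of $i,j,c$ follows as before: $i\neq j$ by construction, and $c\neq i,j$ by acyclicity. So $(i,c)$ lies in the triangle on $\{i,j,c\}$ with union edge $\{i,j\}$.

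No step is a real obstacle, since conditions (i)--(iii) supply each needed edge directly. The only point needing care is the parent-child case. There condition (ii) is used in its ``zero or two'' form to rule out a single recorded parent, which would leave $(i,c)$ in no such triangle. I would also remark that the lemma implicitly treats $G$ as a simple graph, with no self-loops and no self-unions, so that the three vertices are automatically distinct.
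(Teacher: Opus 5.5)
Your proof is correct and takes essentially the same approach as the paper: condition (i) handles union edges, and conditions (ii) and (iii) together handle parent-child edges. Your distinctness check is a refinement the paper omits, but the lemma itself does not assume $\mathcal{G}$ is a DAG, so $c\neq i,j$ should rest on the simple-graph (no self-loop) convention you mention at the end, not on acyclicity.
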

\begin{proof}

Let \(e\in E_U\cup E_C\). 

If \(e=\{i,j\}\in E_U\), then by union-completeness there exists a recorded child \(c\) of \(i\) and \(j\). Hence \(\{i,j\},(i,c),(j,c)\) form a triangle.

If \(e=(i,c)\in E_C\), then \(c\) has exactly two recorded parents, say \(i\) and \(j\), by union-completeness. By the convention that recorded co-parents are joined by a union edge, \(\{i,j\}\in E_U\). Hence again, \(\{i,j\},(i,c),(j,c)\) form a triangle.
\end{proof}


\begin{proposition}\textbf{(Parent-Child Separation)}
\label{thm:3-connected}
Let $G=(N,E_U\cup E_C)$ be a union-complete genealogical network whose associated genetic network $\mathcal{G}$ is a DAG.\\
(i) If $k=1$ or $k=2$, every edge of $G$ belongs to exactly one $k$-component of $G$, and every individual with two recorded parents shares a $k$-component with both recorded parents.\\
(ii) If $k\geq 3$, any $k$-component of $G$ contains an individual whose child lies outside the $k$-component. 
\end{proposition}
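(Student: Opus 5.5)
For (i) I would rely on Lemma~\ref{lemma:pc_edges} together with standard facts about $1$- and $2$-connectivity of the underlying undirected graph. For $k=1$ the claim is immediate. The $k$-components are the connected components. They partition $N$, so each edge lies in exactly one of them. An individual joined by an edge to each of its two recorded parents lies in their component.

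For $k=2$ I would argue in three steps.
\begin{enumerate}
\item By Lemma~\ref{lemma:pc_edges}, every edge $e$ lies in a triangle. A triangle is $2$-connected, so both endpoints of $e$ lie in some maximal $2$-connected vertex set, i.e.\ some $2$-component.
\item Two distinct $2$-components share at most one vertex. The reason is the usual block argument: if two $2$-connected vertex sets share two vertices, their union is still $2$-connected, since removing any single vertex leaves both pieces connected and they still overlap. This contradicts maximality. Hence the two endpoints of $e$ cannot lie together in two different $2$-components, and $e$ belongs to exactly one.
\item For a child $c$ with recorded parents $i,j$, union-completeness (iii) gives the triangle $\{i,j\},(i,c),(j,c)$. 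This triangle lies in a single $2$-component, which therefore contains $c$ together with both parents.
\end{enumerate}

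For (ii), let $C$ be a $k$-component with $k\ge 3$. I would adopt the convention that a $k$-component is nontrivial, i.e.\ $|C|\ge k+1$, since singletons and single edges are degenerate cases. Then the induced subgraph $G[C]$ is $k$-connected, so every vertex of $C$ has at least $k\ge 3$ neighbors inside $C$. Since $\mathcal{G}$ is a DAG, so is its restriction to $C$. Choose a sink $s$ of that restriction, i.e.\ an individual in $C$ none of whose children lie in $C$.

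The neighbors of $s$ inside $C$ are therefore only parents and partners. Union-completeness (ii) gives $s$ at most two parents. Because $s$ has at least three neighbors in $C$, at least one of them is a partner $\ell$, so $\{s,\ell\}\in E_U$. By union-completeness (i), this union has a recorded child $c$ with $(s,c)\in E_C$. Since $s$ is a sink of $\mathcal{G}[C]$, $c\notin C$. Thus $s$ is an individual in $C$ whose child lies outside $C$, as claimed.

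The main obstacle I expect is not the combinatorics but the conventions. For (ii), the minimum-degree bound requires excluding trivial $k$-components, and I would state this explicitly. For (i), the merging argument for overlapping $2$-connected sets must be checked against the paper's definition of $k$-components via internally disjoint paths inside the induced subgraph. I would verify it via Menger's theorem: for two $2$-connected sets $A,B$ with $|A\cap B|\ge 2$, deleting any one vertex leaves $A$ and $B$ each connected, and they still share a vertex, so $A\cup B$ stays connected.
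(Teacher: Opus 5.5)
Your proof is correct. Part (i) is the paper's argument in substance. The paper cites the block decomposition and uses Lemma~\ref{lemma:pc_edges} to rule out bridges. You instead re-derive the one block fact needed (distinct maximal $2$-connected sets share at most one vertex) from the path-based definition via Menger, which spares you from identifying blocks with the paper's $2$-components.

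Part (ii) takes a more direct route than the paper. The paper argues by contradiction. It assumes every child of a member of $S$ lies in $S$ and starts at a child $n_0\in S$. It then uses the degree bound and union-completeness to produce an outgoing parent--child edge to another child in $S$ (passing through a union edge in Case~1). Iterating gives a directed walk on $|S|+1$ distinct vertices of $S$, and the case where $S$ has no parent--child edge is treated separately. You fix a sink $s$ of $\mathcal{G}[C]$ at the outset. As a result, acyclicity is used exactly once, the separate case disappears, and the degree count immediately forces a partner $\ell\in C$ whose union's child must lie outside $C$.

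As a bonus, your argument exhibits a pair of partners inside $C$ whose child lies outside $C$. This is exactly the informal claim made in Section~\ref{sec:spat}. In its main case, the paper's contradiction only yields some member of $S$ with a child outside $S$.

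Your nontriviality convention is genuinely needed, and the paper uses it implicitly when it declares $S$ to be $3$-connected. In the union-complete triangle with parents $i,j$ and child $c$, the singleton $\{c\}$ is a maximal set satisfying the path condition for $k=3$, yet it contains no parent. Under the paper's definition it suffices to exclude singletons. Indeed, $k$ internally disjoint paths between two distinct vertices $u,v$ of a component already force at least $k+1$ vertices and give $u$ at least $k$ distinct neighbours inside it.
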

\begin{proof}
    Part (i) is trivial for $k=1$. For $k=2$,
    note by the block decomposition of $G$ (e.g. see \cite{diestel2025graph}), every edge of $G$ is a bridge edge or belongs to a unique 2-component of $G$. Since every edge of $G$ belongs to a cycle (a triangle, by Lemma \ref{lemma:pc_edges}), no bridge edges exist, and thus each edge belongs to a unique 2-component. Furthermore, every individual with two recorded parents forms a triangle with those parents, and hence shares a 2-connected subgraph with both recorded parents.  This subgraph is contained in a unique 2-component.
    
    
We now prove part (ii). Let $S$ be a $k$-component of $G$ with $k \geq 3$. Then, $S$ is 3-connected.  If $S$ contains no parent-child edge, then any union edge in $S$ has, by union-completeness, a child outside $S$, so the claim follows.  Thus, we assume $S$ contains a parent-child edge.  Therefore, the set of child nodes with a parent in $S$,
\[C = \{c \in N : \exists p \in S \text{ such that } (p, c) \in E_C\},\] is nonempty.

Now suppose, by way of contradiction, that every individual in $C$ is contained in $S$.  Choose $n_0 \in C$.  Since $n_0$ has a recorded parent and $G$ is union-complete, $n_0$ has exactly two recorded parents and, hence, two incoming parent-child edges.   Since $S$ is 3-connected, $n_0$ has degree at least three in $S$.  Hence, there must be an additional edge in $S$ incident to $n_0$ that is not one of these incoming parent-child edges. 

We then have two cases:
    
(Case 1) Assume $\{n_0,i_1\} \in E_U$ is a union edge. Then the pair $n_0$ and $i_1$ has a child $i_2 \in N$ by union-completeness.  Since $n_0 \in S$, we have $i_2 \in C$, and by the contradiction hypothesis, $i_2 \in S$.  Therefore, $i_2\in S$ is a child of $n_0$. In this case set $n_1=i_2$.

(Case 2) Assume the additional edge is a parent-child edge. Since the additional edge is not an incoming parent edge, it must be outgoing.  Thus, we assume $(n_0,i_1)\in E_C$ is an outgoing parent-child edge. Then $i_1\in S$ is a child node of $n_0$. In this case set $n_1=i_1$.
    
In either case, we obtain a parent-child edge $(n_0,n_1)\in E_C$ contained in $S$ with $n_1 \in C \cap S$. Since $n_1 \in C \cap S$, the same argument applies to $n_1$.  Repeating this process produces an additional edge $(n_1,n_2)\in E_C$.  After $|S|$ iterations, we obtain a sequence of edges
    \[(n_0,n_1),(n_1,n_2),\dots,(n_{|S|-1},n_{|S|})\in E_C\]
which form a directed walk in the associated genetic network $(N, E_C)$. Since $\mathcal{G}=(N, E_C)$ is a DAG, the vertices $(n_0, \dots, n_{|S|})$ are pairwise distinct.  After $|S|$ iterations, we obtain $|S| + 1$ pairwise distinct vertices in $S$, contradicting the fact that $S$ has exactly $|S|$ vertices.  This contradiction proves part (ii).
\end{proof}

\subsection{The Pseudogeneration Optimization Problem}\label{sec:6.2}
We now consider the pseudogeneration optimization problem (POP) introduced in Section~\ref{sec:data} (cf. Equation~\eqref{eq:pop}). Algorithm~\ref{alg:pseudogens} summarizes the procedure, which constructs a constrained hierarchical layout of the associated genetic network $\mathcal{G}=(N,E_C)$.  The algorithm first solves the rank-assignment problem with singleton-union rank constraints, then computes alignment components, selects a terminal vertex in each alignment component, and performs a second network simplex pass in which these terminal vertices are constrained to share a common rank.

For the datasets we consider, $\mathcal{G}$ is a directed acyclic graph, allowing pseudogenerations to be computed via the network simplex method. The algorithm incorporates constraints arising from \emph{singleton unions}, which are unions involving individuals with no parents or children, by assigning such partners to share the same pseudogeneration. 

Since union edges do not encode parent-child temporal precedence, this convention is applied not to the connected components of the full genealogical network, but to the connected components of an auxiliary alignment graph determined by parent-child structure together with the singleton-union constraints.  Specifically, let $E_U^{\text{sing}} \subseteq E_U$ denote the set of union edges $\{k, \ell\}$ appearing in constraint (\ref{eq:con2}), and define the auxiliary alignment graph
\[H = (N, E_C^{\text{und}}\cup E_U^{\text{sing}}),\]
where $E_C^{\text{und}}$ denotes the undirected version of the parent-child edge set.  We refer to the connected components of $H$ as \textit{alignment components}. In addition, each alignment component is \textit{anchored} by fixing at least one of its individuals in the latest (i.e.  largest) pseudogeneration.

\begin{algorithm}
\caption{Pseudogeneration Assignment via Graphviz}
\label{alg:pseudogens}
\begin{algorithmic}
\Require Genealogical network $G=(N,E_C\cup E_U)$ whose genetic network $(N,E_C)$ is a DAG.
\Ensure Pseudogeneration assignment $g:N\to\mathbb Z_{>0}$.

\Procedure{Pseudogenerations}{G}
\State $A\gets \texttt{to\_agraph}(G)$. \Comment{Create Graphviz object.}

\ForAll{\(\{u,v\}\in E_U\)}
    \State Set the Graphviz edge attribute \texttt{constraint=false}.
\EndFor

\State \(E_U^{\mathrm{sing}}\gets \emptyset\).

\ForAll{\(k\in N\) such that \(P_k=C_k=\emptyset\) and \(U_k=\{\ell\}\)}
    \State \(E_U^{\mathrm{sing}}\gets E_U^{\mathrm{sing}}\cup\{\{k,\ell\}\}\).
    \State Add a Graphviz subgraph containing \(k\) and \(\ell\) with attribute \texttt{rank=same}. \Comment{Enforces \((3c)\).}
\EndFor

\State Set Graphviz attributes \texttt{TBbalance}, \texttt{newrank}, and iteration limits as specified.

\State \(A^{(1)}\gets \texttt{NetworkSimplex}(A)\). \Comment{First network-simplex pass.}
\State Let \(g^{(1)}\) be the rank assignment returned by \(A^{(1)}\).

\State Define the auxiliary alignment graph \(H=(N,E_C^{\mathrm{und}}\cup E_U^{\mathrm{sing}})\).
\State Compute the connected components \(C_1,\ldots,C_q\) of \(H\).

\ForAll{\(C_w\), \(w=1,\ldots,q\)}
    \State \(v_w^*\gets \arg\max_{v\in C_w} g_v^{(1)}\). \Comment{Select a terminal vertex in \(C_w\).}
\EndFor

\State Add a Graphviz subgraph containing \(v_1^*,\ldots,v_q^*\) with attribute \texttt{rank=same}.

\State \(A^{(2)}\gets \texttt{NetworkSimplex}(A)\). \Comment{Second network-simplex pass.}
\State Let \(g\) be the rank assignment returned by \(A^{(2)}\).

\State Normalize \(g\) by setting \(g_i\gets g_i-\min_{j\in N}g_j+1\) for all \(i\in N\).

\State \Return \(g\).

\EndProcedure
\end{algorithmic}
\end{algorithm}

Since Graphviz does not natively support deepest-generation anchoring, we implement the network simplex method in two stages: first to compute a provisional layering, and then again after fixing a deepest individual in each alignment component to enforce the anchoring constraint.  The alignment step in Algorithm \ref{alg:pseudogens} implements the alignment-component convention by selecting a terminal vertex in each connected component of the auxiliary alignment graph $H=(N, E_C^{\text{und}} \cup E_U^{\text{sing}})$ and constraining these selected vertices to share a common rank before the second network simplex pass.  This enforces the same alignment as uniformly shifting each alignment component so that its terminal pseudogeneration agrees with the global terminal pseudogeneration.

We now present a proof of Theorem~\ref{feasible}, which guarantees the feasibility of the POP.

\begin{proof}
Because the associated genetic network $\mathcal{G} = (N, E_C)$ is a finite DAG, a topological ordering exists. We therefore choose a topological sorting and define $h_j$ to be the position of $j$ in this ordering, beginning at 1.  Then, for each parent-child edge $(i,j) \in E_C$, $i$ precedes $j$, so $h_i < h_j$.  Hence, $h_j - h_i \geq 1$, so the preliminary assignment $h$ satisfies constraint \ref{eq:con1}.  Further, by construction, $h_j \in \mathbb{Z}_{> 0}$, so $h$ satisfies constraint \ref{eq:con3} as well.

It remains to satisfy constraint \ref{eq:con2}.  This constraint applies only to individuals $k$ with no recorded parents, no recorded children, and exactly one recorded union partner $\ell$.  Initialize $g_i = h_i$ for all $i \in N$.  Then, for each individual $k$ satisfying constraint \ref{eq:con2}, set $g_k = g_\ell$, where $U_k = \{\ell\}$, unless both endpoints satisfy this constraint, in which case assign both endpoints the same positive integer value.  Because $k$ has no incident parent-child edges, changing its pseudogeneration does not alter any parent-child difference $g_j - g_i$.  Thus, the temporal constraints remain satisfied, the singleton-union constraint is satisfied by construction, and all pseudogeneration values remain positive integers.

This leaves a nonempty integer feasible set.  Moreover, the objective is bounded below by $|E_C|$, since $g_j - g_i \geq 1$ for every $(i, j) \in E_C$.  Because the objective takes integer values, an optimal feasible assignment exists.

Algorithm \ref{alg:pseudogens} applies network simplex to the rank-assignment linear program corresponding to Equation (\ref{eq:pop}).  Since the feasible region is nonempty and the objective is bounded below, network simplex is guaranteed to reach an optimal solution when run to convergence (\cite[Ch.~11, \S11.5]{Ahuja}).  The constraint matrix has at most two nonzero entries in each difference-constraint row, one $+1$ and one $-1$, and is the node-arc incidence matrix of an auxiliary directed graph, possibly augmented by identity rows for lower bounds.  Hence, the matrix is totally unimodular.  Because the right-hand side is integral, the linear-programming relaxation has integral optimal vertices, so an optimal integral solution exists and can be returned by network simplex.  Since this is the same rank-assignment formulation used by Graphviz's version of network simplex (\cite{Gansner}), Algorithm \ref{alg:pseudogens} returns an optimal solution to the pseudogeneration optimization problem when the network simplex routine is run to convergence.

Finally, the post-processing alignment convention preserves feasibility and optimality.  Indeed, shifting every pseudogeneration value in an alignment component by the same constant leaves each parent-child difference $g_j - g_i$ unchanged, since both endpoints of any parent-child edge lie in the same alignment component.  Therefore, constraint \ref{eq:con1} and the objective function are preserved.  Constraint \ref{eq:con2} is also preserved because the endpoints of each singleton-union edge in $E_U^{\text{sing}}$ lie in the same alignment component and are shifted by the same constant.  Lastly, the alignment shifts each alignment component forward so that its maximal pseudogeneration agrees with the global maximal pseudogeneration; hence, all shifts are nonnegative and all pseudogeneration values remain positive, preserving constraint \ref{eq:con3}.  Thus, the aligned assignment remains feasible and optimal.

The uniformly shifted assignment therefore provides a feasible solution to the second-pass constrained problem with the same objective value as the original optimum.  Hence, the second network simplex pass also attains the same optimal objective value when run to convergence.


\end{proof}

Although Algorithm~\ref{alg:pseudogens} always attains the minimum of the objective function $f(\mathbf{g})$ when run to convergence, the corresponding pseudogeneration labeling $\mathbf{g}$ need not be unique: distinct assignments may achieve the same optimal value. However, for a fixed node ordering, the network simplex method follows a deterministic sequence of operations and converges to a specific optimal solution. Consequently, the resulting pseudogeneration assignment is reproducible across runs. This yields the following result.

\begin{proposition}\textbf{(Order-Deterministic Optimality)}\label{unique_ordering}
Given a fixed node ordering and deterministic tie-breaking in the network simplex routine, the pseudogeneration algorithm produces a deterministic optimal solution to the pseudogeneration optimization problem.
\end{proposition}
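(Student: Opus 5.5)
The proof splits into two parts: optimality, which comes from the proof of Theorem~\ref{feasible}, and determinism, which comes from viewing Algorithm~\ref{alg:pseudogens} as a composition of deterministic maps.

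\textbf{Optimality.} The proof of Theorem~\ref{feasible} already shows the following. The rank-assignment linear program underlying Equation~\eqref{eq:pop} is feasible, bounded below by $|E_C|$, and totally unimodular. Network simplex run to convergence therefore returns an integral optimal vertex. The alignment shift and the normalization $g_i\gets g_i-\min_j g_j+1$ preserve feasibility and the objective value. Finally, the second constrained pass attains the same optimal value. I will simply invoke these facts, so the output is an optimal solution of the pseudogeneration optimization problem.

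\textbf{Determinism.} I will write the algorithm as a composition of steps and check that each one is a function of its inputs once the node ordering and tie-breaking rule are fixed.

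\begin{enumerate}
\item \emph{Graph construction.} Building the Graphviz object, marking union edges with \texttt{constraint=false}, and collecting $E_U^{\mathrm{sing}}$ are each deterministic given the ordering of $N$ and $E$.
\item \emph{First simplex pass.} Network simplex is a finite sequence of pivots. It starts from an initial feasible tree, which is built by a fixed procedure, e.g.\ the tight-tree construction in \cite{Gansner}. At each pivot it chooses an entering and a leaving edge. With a fixed node/edge order and deterministic tie-breaking, both choices are functions of the current state. By induction on the pivot count, the pivot sequence and hence the terminal tree and rank vector $g^{(1)}$ are uniquely determined.
\item \emph{Alignment components.} The components $C_1,\dots,C_q$ of $H$ are determined by the graph itself. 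Each $\arg\max$ selection of $v_w^*$ is made unique by the same ordering-based tie-break.
\item \emph{Second pass and normalization.} This is the same argument as the first pass, applied to the augmented problem with the added \texttt{rank=same} subgraph, followed by a deterministic normalization.
\end{enumerate}

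A composition of deterministic maps is deterministic, which completes the argument.

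The main obstacle is step~2: I must be precise about what "deterministic tie-breaking" covers and argue that the pivot sequence terminates. Degenerate pivots could cycle, and then "run to convergence" would be ill-defined. I would handle this with the strongly feasible tree / anti-cycling rule from \cite[Ch.~11]{Ahuja}, taking it as part of the routine's fixed specification. Under that rule, termination follows from the standard finiteness theorem, and uniqueness follows from the induction above. A minor point is that "optimal" refers to the original problem~\eqref{eq:pop}, not the second-pass problem. This is settled by the closing argument of Theorem~\ref{feasible}, which shows the two problems have the same optimal value and that the second-pass output is feasible for~\eqref{eq:pop}.
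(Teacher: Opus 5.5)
Your proposal is correct and takes essentially the same route as the paper. The paper's proof observes that each stage of Algorithm~\ref{alg:pseudogens} is deterministic under the fixed ordering and tie-breaking (the first pass, the singleton-union constraints, the alignment anchoring, and the second pass), and optimality comes from Theorem~\ref{feasible}; your pivot-by-pivot induction, the ordering-based tie-break for the $\arg\max$, and your remark that second-pass optimality transfers to \eqref{eq:pop} spell out what the paper leaves implicit, though the anti-cycling rule you adopt is an extra hypothesis not present in the statement or in the Graphviz routine, standing in for the paper's tacit assumption that network simplex is run to convergence.
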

\begin{proof}

Given a fixed node ordering and deterministic tie-breaking in the network simplex routine, each step of Algorithm \ref{alg:pseudogens} is deterministic:  the initial rank assignment, the singleton-union constraints, the alignment-component anchoring step, and the final network simplex pass.  Hence, the algorithm returns the same optimal pseudogeneration assignment on repeated runs.

\end{proof}

{\footnotesize

\begin{longtable}{|>{\raggedright\arraybackslash}p{0.28\textwidth}|r|r|>{\raggedright\arraybackslash}p{0.47\textwidth}|}\caption{Summary of the individual Kinsources datasets $\{G_i\}_{i=1}^{105}$ including number of nodes, edges, and Kinsources URLs.} \\
\hline
\hline
\textbf{Network Name} & \textbf{Nodes} & \textbf{Edges} & \textbf{Kinsources URL} \\
\hline
\endfirsthead

\hline
\textbf{Network Name} & \textbf{Nodes} & \textbf{Edges} & \textbf{Kinsources URL} \\
\hline
\endhead

\hline
\endfoot

family & 17 & 24 & https://www.kinsources.net/kidarep/dataset-31-family.xhtml\\ \hline
ngatatjara\_1966\_au04 & 19 & 30 & https://www.kinsources.net/kidarep/dataset-21-ngatatjara-1966-au04.xhtml\\ \hline
wanindiljaugwa\_1948\_au06 & 20 & 29 & https://www.kinsources.net/kidarep/dataset-38-wanindiljaugwa-1948-au06.xhtml\\ \hline
natchez\_1700 & 26 & 17 & https://www.kinsources.net/kidarep/dataset-40-natchez-1700.xhtml\\ \hline
hatfields\_and\_mccoys & 29 & 48 & https://www.kinsources.net/kidarep/dataset-46-hatfields-and-mccoys.xhtml\\ \hline
gundangborn\_1948\_au02 & 35 & 53 & https://www.kinsources.net/kidarep/dataset-81-gundangborn-1948-au02.xhtml\\ \hline
angmagsalik\_1884\_nu01 & 40 & 59 & https://www.kinsources.net/kidarep/dataset-33-angmagsalik-1884-nu01.xhtml\\ \hline
hare\_1956\_nd05 & 48 & 76 & https://www.kinsources.net/kidarep/dataset-84-hare-1956-nd05.xhtml\\ \hline
vedda\_1905\_as04 & 48 & 86 & https://www.kinsources.net/kidarep/dataset-41-vedda-1905-as04.xhtml\\ \hline
takamiut\_1927\_64\_nu03 & 64 & 109 & https://www.kinsources.net/kidarep/dataset-91-takamiut-1927-64-nu03.xhtml\\ \hline
kutchin\_1947\_nd06 & 66 & 109 & https://www.kinsources.net/kidarep/dataset-82-kutchin-1947-nd06.xhtml\\ \hline
belcher\_island\_1958\_nu04 & 70 & 102 & https://www.kinsources.net/kidarep/dataset-50-belcher-island-1958-nu04.xhtml\\ \hline
slavey\_1911\_nd12 & 77 & 134 & https://www.kinsources.net/kidarep/dataset-69-slavey-1911-nd12.xhtml\\ \hline
semang\_1924\_50\_as03 & 83 & 127 & https://www.kinsources.net/kidarep/dataset-8-semang-1924-50-as03.xhtml\\ \hline
tiwi & 87 & 111 & https://www.kinsources.net/kidarep/dataset-216-tiwi.xhtml\\ \hline
miwuyt\_1967\_au03 & 90 & 119 & https://www.kinsources.net/kidarep/dataset-12-miwuyt-1967-au03.xhtml\\ \hline
arara & 105 & 246 & https://www.kinsources.net/kidarep/dataset-87-arara.xhtml\\ \hline
oodnadatta & 112 & 182 & https://www.kinsources.net/kidarep/dataset-15-oodnadatta.xhtml\\ \hline
jie & 116 & 177 & https://www.kinsources.net/kidarep/dataset-226-jie.xhtml\\ \hline
eyak\_1890 & 118 & 192 & https://www.kinsources.net/kidarep/dataset-39-eyak-1890.xhtml\\ \hline
tlingit & 125 & 202 & https://www.kinsources.net/kidarep/dataset-242-tlingit.xhtml\\ \hline
kaingang & 128 & 123 & https://www.kinsources.net/kidarep/dataset-164-kaingang.xhtml\\ \hline
shoshone\_1880\_nd11 & 128 & 202 & https://www.kinsources.net/kidarep/dataset-23-shoshone-1880-nd11.xhtml\\ \hline
paiute\_1880\_nd09 & 139 & 201 & https://www.kinsources.net/kidarep/dataset-79-paiute-1880-nd09.xhtml\\ \hline
pul\_eliya\_1954\_simpler\_version & 147 & 243 & https://www.kinsources.net/kidarep/dataset-78-pul-eliya-1954-simpler-version.xhtml\\ \hline
labrador\_inuit\_1776\_nu02 & 168 & 221 & https://www.kinsources.net/kidarep/dataset-14-labrador-inuit-1776-nu02.xhtml\\ \hline
top\_of\_the\_mountain & 169 & 275 & https://www.kinsources.net/kidarep/dataset-11-top-of-the-mountain.xhtml\\ \hline
port\_keats & 216 & 286 & https://www.kinsources.net/kidarep/dataset-254-port-keats.xhtml\\ \hline
ainu\_1880\_as01 & 216 & 378 & https://www.kinsources.net/kidarep/dataset-22-ainu-1880-as01.xhtml\\ \hline
lainiovouma\_1952\_eu03 & 218 & 353 & https://www.kinsources.net/kidarep/dataset-17-lainiovouma-1952-eu03.xhtml\\ \hline
suya & 219 & 371 & https://www.kinsources.net/kidarep/dataset-171-suya.xhtml\\ \hline
tikar & 240 & 395 & https://www.kinsources.net/kidarep/dataset-158-tikar.xhtml\\ \hline
waimiri-atroari & 244 & 482 & https://www.kinsources.net/kidarep/dataset-66-waimiri-atroari.xhtml\\ \hline
konkomba & 258 & 253 & https://www.kinsources.net/kidarep/dataset-206-konkomba.xhtml\\ \hline
copper\_1922\_nu10 & 272 & 445 & https://www.kinsources.net/kidarep/dataset-36-copper-1922-nu10.xhtml\\ \hline
sarmi & 277 & 516 & https://www.kinsources.net/kidarep/dataset-213-sarmi.xhtml\\ \hline
dogrib\_1911\_25\_59\_nd04 & 278 & 466 & https://www.kinsources.net/kidarep/dataset-62-dogrib-1911-25-59-nd04.xhtml\\ \hline
konkama\_1951\_eu01 & 289 & 477 & https://www.kinsources.net/kidarep/dataset-9-konkama-1951-eu01.xhtml\\ \hline
tikopia\_1930 & 294 & 441 & https://www.kinsources.net/kidarep/dataset-18-tikopia-1930.xhtml\\ \hline
tory & 299 & 532 & https://www.kinsources.net/kidarep/dataset-13-tory.xhtml\\ \hline
nunamiut-tareumiut\_1900\_nu12 & 304 & 472 & https://www.kinsources.net/kidarep/dataset-42-nunamiut-tareumiut-1900-nu12.xhtml\\ \hline
mowanjum-kalumburu & 310 & 328 & https://www.kinsources.net/kidarep/dataset-209-mowanjum-kalumburu.xhtml\\ \hline
kung\_1952\_af01 & 313 & 552 & https://www.kinsources.net/kidarep/dataset-67-kung-1952-af01.xhtml\\ \hline
parakana & 330 & 623 & https://www.kinsources.net/kidarep/dataset-73-parakana.xhtml\\ \hline
wilcania & 337 & 572 & https://www.kinsources.net/kidarep/dataset-51-wilcania.xhtml\\ \hline
apache\_1936\_nd03 & 378 & 610 & https://www.kinsources.net/kidarep/dataset-52-apache-1936-nd03.xhtml\\ \hline
dogon-konsogu-donyu & 399 & 592 & https://www.kinsources.net/kidarep/dataset-204-dogon-konsogu-donyu.xhtml\\ \hline
kodiak & 410 & 746 & https://www.kinsources.net/kidarep/dataset-240-kodiak.xhtml\\ \hline
melombo & 435 & 672 & https://www.kinsources.net/kidarep/dataset-64-melombo.xhtml\\ \hline
makuna\_1973 & 436 & 761 & https://www.kinsources.net/kidarep/dataset-72-makuna-1973.xhtml\\ \hline
genesis & 439 & 628 & https://www.kinsources.net/kidarep/dataset-70-genesis.xhtml\\ \hline
arawete & 454 & 980 & https://www.kinsources.net/kidarep/dataset-74-arawete.xhtml\\ \hline
ojibwa\_1949\_nd08 & 479 & 831 & https://www.kinsources.net/kidarep/dataset-19-ojibwa-1949-nd08.xhtml\\ \hline
netsilik\_1922\_nu09 & 502 & 790 & https://www.kinsources.net/kidarep/dataset-34-netsilik-1922-nu09.xhtml\\ \hline
torres\_strait & 585 & 1251 & https://www.kinsources.net/kidarep/dataset-44-torres-strait.xhtml\\ \hline
nunivak & 619 & 1237 & https://www.kinsources.net/kidarep/dataset-251-nunivak.xhtml\\ \hline
chenchu\_1940\_as02 & 636 & 1154 & https://www.kinsources.net/kidarep/dataset-92-chenchu-1940-as02.xhtml\\ \hline
igluligmiut\_1961\_nu07 & 645 & 1099 & https://www.kinsources.net/kidarep/dataset-65-igluligmiut-1961-nu07.xhtml\\ \hline
nyungar & 657 & 1166 & https://www.kinsources.net/kidarep/dataset-27-nyungar.xhtml\\ \hline
anuta\_1972 & 659 & 1288 & https://www.kinsources.net/kidarep/dataset-3-anuta-1972.xhtml\\ \hline
mbuti\_forest\_1957\_af02 & 706 & 1184 & https://www.kinsources.net/kidarep/dataset-60-mbuti-forest-1957-af02.xhtml\\ \hline
yaraldi & 738 & 1212 & https://www.kinsources.net/kidarep/dataset-32-yaraldi.xhtml\\ \hline
trio\_1960s & 782 & 1368 & https://www.kinsources.net/kidarep/dataset-28-trio-1960s.xhtml\\ \hline
achuar\_pastaza & 795 & 1389 & https://www.kinsources.net/kidarep/dataset-150-achuar-pastaza.xhtml\\ \hline
nucoorilma\_tingha & 798 & 1416 & https://www.kinsources.net/kidarep/dataset-229-nucoorilma-tingha.xhtml\\ \hline
pakaa\_nova & 815 & 1603 & https://www.kinsources.net/kidarep/dataset-7-pakaa-nova.xhtml\\ \hline
la\_hague & 828 & 1363 & https://www.kinsources.net/kidarep/dataset-132-la-hague.xhtml\\ \hline
saudi\_royal\_genealogy & 868 & 981 & https://www.kinsources.net/kidarep/dataset-20-saudi-royal-genealogy.xhtml\\ \hline
surui & 926 & 1982 & https://www.kinsources.net/kidarep/dataset-68-surui.xhtml\\ \hline
achuar\_huasaga\_chankuap & 1140 & 2013 & https://www.kinsources.net/kidarep/dataset-173-achuar-huasaga-chankuap.xhtml\\ \hline
samburu & 1263 & 2021 & https://www.kinsources.net/kidarep/dataset-223-samburu.xhtml\\ \hline
ayd\_nl\_yoruk\_2005 & 1269 & 2395 & https://www.kinsources.net/kidarep/dataset-24-ayd-nl-yoruk-2005.xhtml\\ \hline
todas & 1423 & 3211 & https://www.kinsources.net/kidarep/dataset-258-todas.xhtml\\ \hline
alyawarra\_1818-1979\_au10 & 1461 & 2678 & https://www.kinsources.net/kidarep/dataset-88-alyawarra-1818-1979-au10.xhtml\\ \hline
nobles\_ile-de-france\_1000-1440 & 1463 & 1969 & https://www.kinsources.net/kidarep/dataset-306-nobles-ile-de-france-1000-1440.xhtml\\ \hline
mebengokre\_kayapo & 1466 & 2572 & https://www.kinsources.net/kidarep/dataset-320-mebengokre-kayapo.xhtml\\ \hline
omaha\_1880 & 1513 & 2217 & https://www.kinsources.net/kidarep/dataset-90-omaha-1880.xhtml\\ \hline
tikuna-arara & 1695 & 3206 & https://www.kinsources.net/kidarep/dataset-103-tikuna-arara.xhtml\\ \hline
candoshi & 1840 & 3649 & https://www.kinsources.net/kidarep/dataset-95-candoshi.xhtml\\ \hline
chuukese\_1947\_1940 & 2049 & 4159 & https://www.kinsources.net/kidarep/dataset-35-chuukese-1947-1940.xhtml\\ \hline
us-presidents & 2477 & 4015 & https://www.kinsources.net/kidarep/dataset-56-us-presidents.xhtml\\ \hline
kelkummer & 2588 & 5792 & https://www.kinsources.net/kidarep/dataset-61-kelkummer.xhtml\\ \hline
familles\_bretonnes\_xve-xviie\_siecles & 2680 & 4131 & https://www.kinsources.net/kidarep/dataset-177-familles-bretonnes-xve-xviie-siecles.xhtml\\ \hline
manus\_1929 & 2821 & 5122 & https://www.kinsources.net/kidarep/dataset-30-manus-1929.xhtml\\ \hline
bassari\_guinea & 2880 & 5262 & https://www.kinsources.net/kidarep/dataset-29-bassari-guinea.xhtml\\ \hline
warao & 2899 & 5847 & https://www.kinsources.net/kidarep/dataset-26-warao.xhtml\\ \hline
cocama-cocamilla & 2975 & 5108 & https://www.kinsources.net/kidarep/dataset-159-cocama-cocamilla.xhtml\\ \hline
torshan & 3008 & 6074 & https://www.kinsources.net/kidarep/dataset-80-torshan.xhtml\\ \hline
ammonni & 3014 & 5454 & https://www.kinsources.net/kidarep/dataset-128-ammonni.xhtml\\ \hline
feistritz\_am\_gael\_1990 & 3151 & 4289 & https://www.kinsources.net/kidarep/dataset-54-feistritz-am-gael-1990.xhtml\\ \hline
duu\_rea & 4109 & 6544 & https://www.kinsources.net/kidarep/dataset-287-duu-rea.xhtml\\ \hline
obidos & 4178 & 7354 & https://www.kinsources.net/kidarep/dataset-45-obidos.xhtml\\ \hline
charlevoix & 4463 & 8439 & https://www.kinsources.net/kidarep/dataset-115-charlevoix.xhtml\\ \hline
baruya & 5016 & 10 738 & https://www.kinsources.net/kidarep/dataset-249-baruya.xhtml\\ \hline
ancien\_regime & 5891 & 10 683 & https://www.kinsources.net/kidarep/dataset-53-ancien-regime.xhtml\\ \hline
ragusa & 5999 & 11 317 & https://www.kinsources.net/kidarep/dataset-47-ragusa.xhtml\\ \hline
samo & 6389 & 10 885 & https://www.kinsources.net/kidarep/dataset-117.xhtml\\ \hline
ebrei & 7331 & 11 959 & https://www.kinsources.net/kidarep/dataset-94-ebrei.xhtml\\ \hline
kel\_owey & 8809 & 15 645 & https://www.kinsources.net/kidarep/dataset-194-kel-owey.xhtml\\ \hline
sainte\_catherine & 9595 & 14 988 & https://www.kinsources.net/kidarep/dataset-93-sainte-catherine.xhtml\\ \hline
dogon\_boni & 11 294 & 25 430 & https://www.kinsources.net/kidarep/dataset-186-dogon-boni.xhtml\\ \hline
dynasties\_juives & 11 521 & 19 157 & https://www.kinsources.net/kidarep/dataset-125-dynasties-juives.xhtml\\ \hline
bwa\_slam\_biogsurvey & 18 645 & 32 726 & https://www.kinsources.net/kidarep/dataset-307-bwa-slam-biogsurvey.xhtml\\ \hline
san\_marino & 28 586 & 51 446 & https://www.kinsources.net/kidarep/dataset-76-san-marino.xhtml\\ \hline
watchi\_2017 & 50 519 & 83 376 & https://www.kinsources.net/kidarep/dataset-275-watchi-2017.xhtml\\ \hline

\end{longtable}

\bibliographystyle{elsarticle-num}
\bibliography{references}

\end{document}